\def\N{\mathbb{N}}
\def\P{\mathcal{P}}
\def\B{\square}
\def\T{\mathcal{T}}
\def\E{\mathcal{E}}
\def\eps{\varepsilon}
\def\dist{\text{dist}}
\def\Nei{\text{Nei}}
\def\coord{\text{coord}}
\def\loc{\text{loc}}
\def\type{\text{type}}
\def\minPts{\mathit{minPts}}
\newcommand{\tony}[1]{}
\renewcommand{\tony}[1]{{\color{blue}{\bf{Tony says:}} \emph{#1}}}
\newcommand{\zhuo}[1]{}
\renewcommand{\zhuo}[1]{{\color{red}{\bf{zhuo:}} \emph{#1}}}
\newcommand{\junhao}[1]{}
\renewcommand{\junhao}[1]{{\color{purple}{\bf{Junhao says:}} \emph{#1}}}
\newcommand{\TODO}[1]{}
\renewcommand{\TODO}[1]{{\color{green}{\bf{TODO:}} \emph{#1}}}
\definecolor{orange}{RGB}{255, 165, 0}
\definecolor{limegreen}{RGB}{50, 205, 50}
\definecolor{brightyellow}{RGB}{255, 215, 0}
\newcommand{\bfrev}[1]{\iffalse #1 \fi}
\newtheorem{fact}{Fact}
\title{$O(1)$-Round MPC Algorithms for Multi-dimensional Grid Graph Connectivity, Euclidean MST and DBSCAN} 
\titlerunning{$O(1)$-Round MPC Algorithms for Grid Graph Connectivity, EMST and DBSCAN} 
\author{Junhao Gan }{The University of Melbourne, Australia}{junhao.gan@unimelb.edu.au}{}{}
\author{Anthony Wirth}{The University of Melbourne, Australia}{anthony.wirth@sydney.edu.au}{}{}
\author{Zhuo Zhang}{The University of Melbourne, Australia}{zhuo.zhang@student.unimelb.edu.au}{}{}
\authorrunning{J. Gan, A. Wirth, and Z. Zhang} 
\keywords{Massively Parallel Computation,
Graph Connectivity,
Grid Graphs,
Euclidean Minimum Spanning Tree,
DBSCAN} 
\renewcommand{\mainbodyrepeatedtheorem}{\emph{\textbf{[*]}}}
\begin{document}
\nolinenumbers

\maketitle

\begin{abstract}
In this paper, we investigate three fundamental problems in the {\em Massively Parallel Computation} (MPC) model: (i) {\em grid graph connectivity}, (ii) approximate {\em Euclidean Minimum Spanning Tree} (EMST), and (iii) approximate DBSCAN. 

Our first result is a $O(1)$-round {\em Las Vegas} (i.e., succeeding with high probability) MPC algorithm for computing the {\em connected components} on a $d$-dimensional $c$-penetration grid graph ($(d,c)$-grid graph), where both $d$ and $c$ are positive integer constants. 
In such a grid graph, each vertex is a point with integer coordinates in $\mathbb{N}^d$, and an edge can only exist between two distinct vertices with $\ell_\infty$-norm at most $c$. 
To our knowledge, the current best existing result for computing the connected components (CC's) on $(d,c)$-grid graphs in the MPC model is to run the state-of-the-art MPC CC algorithms that are designed for general graphs: they achieve  
$O(\log \log n + \log D)$~\cite{DBLP:conf/focs/BehnezhadDELM19} and $O(\log \log n + \log \frac{1}{\lambda})$~\cite{DBLP:conf/podc/AssadiSW19} rounds, respectively, where $D$ is the {\em diameter} and $\lambda$ is the {\em spectral gap} of the graph.
With our grid graph connectivity technique, our second main result is a $O(1)$-round Las Vegas MPC algorithm for computing  {\em approximate Euclidean MST}. 
The existing state-of-the-art result on this problem is the $O(1)$-round MPC algorithm proposed by Andoni et al.~\cite{DBLP:conf/stoc/AndoniNOY14}, which only guarantees an approximation on the overall weight {\em in expectation}.  
In contrast, our algorithm not only guarantees a {\em deterministic overall weight approximation}, but also achieves
 a {\em deterministic edge-wise weight approximation}.
The latter property is crucial to many applications, such as finding the Bichromatic Closest Pair and Single-Linkage Clustering.
Last but not the least, our third main result is a $O(1)$-round Las Vegas MPC algorithm for computing an {\em approximate DBSCAN} clustering in $O(1)$-dimensional Euclidean space.

Given the importance of the aforementioned three problems, our proposed $O(1)$-round MPC algorithms may be of independent interest for designing new MPC algorithms for solving other related problems.
\end{abstract}
\begin{toappendix}

\section{Atomic MPC operations}
\label{app:op}

\subparagraph{Sorting.} In the MPC model, sorting can be $O(1)$ rounds.

\begin{fact}[\cite{DBLP:conf/isaac/GoodrichSZ11}]
\label{theo:sorting}
Sorting $n$ elements can be solved in {$O(\log_s n)$} round in the MPC model with $O(n)$ total memory and {$\Theta(s)\subseteq \Theta(n^{\alpha})$} local memory with any constant $\alpha\in(0,1)$. Specifically, the input is a set of $n$ comparable items. At the beginning, each machine stores $O(s)$ of the input data. There exists an algorithm that can run in {$O(\log_s n)$} round and leave the $n$ items sorted on the machines, i.e. the machine with smaller machine index holds a smaller part of $O(s)$ items. 
\end{fact}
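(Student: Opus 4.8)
Since this is a known result \cite{DBLP:conf/isaac/GoodrichSZ11}, the plan is only to sketch the standard proof --- a randomized distribution sort (sample sort) adapted to the MPC model --- rather than to reprove it in full.

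The backbone is a recursion that shrinks the problem by a large factor per round. There are $P=\Theta(n/s)$ machines, and the target configuration is the sorted sequence laid out contiguously: machine $1$ holds the $\Theta(s)$ smallest items, machine $2$ the next $\Theta(s)$, and so on. I would maintain the invariant that, at each stage, the items form a list of contiguous \emph{blocks}, where a block of size $N$ occupies a dedicated group of $\Theta(N/s)$ consecutive machines and every item of an earlier block is at most every item of a later block. Initially there is one block of size $n$ on all $P$ machines. In one round a block of size $N$ is split into $k:=\Theta(s/\log n)$ sub-blocks of size $O(N/k)$, placed on disjoint sub-groups and ordered consistently; recursing in parallel on all current blocks, after $O(\log_k n)=O(\log_s n)$ rounds every block has size $O(s)$ and fits on a single machine, which then sorts it locally with no further communication. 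The output configuration, and its correctness, follow from the block invariant.

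The splitting step is where randomness enters. To split a size-$N$ block into $k$ balanced sub-blocks I need $k-1$ splitters roughly evenly spaced in the block's sorted order. The plan: include each of the $N$ items in a sample $R$ independently with probability $p=\Theta(k\log n/N)$, so that $|R|=\Theta(k\log n)=\Theta(s)$ with high probability --- small enough to be gathered onto one machine of the group; that machine sorts $R$ locally and outputs every $\Theta(\log n)$-th element as a splitter $x_1<\dots<x_{k-1}$. A Chernoff bound on the number of sampled items falling into any fixed interval shows that, with high probability, every bucket $[x_i,x_{i+1})$ holds $O(N/k)$ of the block's items; a union bound over the $k$ buckets of a block and over the $O(\log_s n)$ recursion levels (that is, over $O(n)$ events, each failing with probability $n^{-\Omega(1)}$) preserves the overall high-probability guarantee. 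The $k$ splitters are then broadcast to the $\Theta(N/s)$ machines of the block, each machine partitions its $O(s)$ items against the splitters, and the parts are routed to the sub-groups owning the corresponding buckets --- which, by the balance bound, each receive $O(s)$ items per machine.

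The main obstacle is not correctness or the high-probability claim --- both follow routinely from the oversampling --- but the round count. Broadcasting the splitters and performing the routing each take place among $\Theta(N/s)$ machines, which can be much larger than $s$, so in the MPC model each is itself an $O(\log_s(N/s))$-round primitive rather than a single round; one has to verify that, suitably pipelined over the recursion, these costs telescope to $O(\log_s n)$ (equivalently $O(1/\alpha)$) and not merely to $O(\log_s^2 n)$. This careful accounting is exactly what \cite{DBLP:conf/isaac/GoodrichSZ11} carries out, so for the full details I would defer to that paper; for the purposes of this work only $O(1)$ rounds (with $\alpha$ constant) is needed, which either bound already delivers.
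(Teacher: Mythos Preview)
The paper does not prove this statement at all: it is stated as a \texttt{Fact} with a citation to \cite{DBLP:conf/isaac/GoodrichSZ11} and treated as a black-box primitive, with no accompanying argument. Your sketch of randomized sample sort is a reasonable outline of the standard construction behind the cited result, and you correctly flag the one nontrivial point (that the per-level broadcast and routing costs must be accounted for carefully to avoid an extra $\log_s n$ factor), but none of this is needed here --- simply citing \cite{DBLP:conf/isaac/GoodrichSZ11}, as the paper does, suffices.
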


\subparagraph{{Broadcasting.}} Consider a set of data $D$ with size of $|D|\leq s$ stored in one machine $\mathcal{M}$. 
There exists an algorithm in the MPC model that {\em broadcasts} the set of data $D$ to all machines. 
Specifically, when the algorithm terminates, each machine holds a copy of $D$. 
The broadcasting task can be done in two rounds if $\alpha \geq 1/2$, in which case $s \geq m$. 
In the first round, the machine $\mathcal{M}$ partitions $D$ evenly into $m$ parts, $D_1,\ldots, D_m$, 
and sends the $i$-th part to the $i$-th machine. 
In the second round, for $i$ in $[1,m]$, 
the $i$-th machine simultaneously sends $D_i$ to all the other machines. 
It can be verified that the amount of data sent and received in each round per machine is bounded by $|D|\in O(s)$.

\subparagraph{{Duplicate removal.}} 
Given $n$ items; some of them may be duplicates.
In the MPC model, removing all those duplicated items can be achieved in 
$O(\log_s n)$ rounds by two steps: (i) sort the $n$ items, (ii) delete an item if it equals to its {\em predecessor}. 

\subparagraph{{Sampling.}} 
Consider a set of data $V_1$ of size $n$ stored in a contiguous set of machines. 
Let $k=\beta s$ to be the desired size of the random subset of $V_1$, where  
$\beta$ is some universal constant.
Then a random subset of $V_1$ of size in the range of $[k,3k]$ can be obtained in $O(\log_s n)$ rounds with high probability at least $1 - \frac{1}{n^{\Omega(1)}}$.

Specifically, the sampling process works as follows.
In each machine, flip a coin for each element in $V_1$ stored in its local memory, where with probability $\frac{2k}{|V_1|}$ mark the element as a {\em candidate}, and record the candidate count.
Each machine sends the candidate count in it to machine $M_0$.
If the total candidate count is outside the range $[k, 3k]$, i.e., less than $k$ or greater than $3k$,
then start the whole process over again from scratch. 
Otherwise, $M_0$ instructs all the machines to to send their candidates to it. 
The resulting set of candidate stored in $M_0$ is a random subset of $V_1$.

According to the Chernoff bound~\cite{DBLP:books/daglib/0012859}, the success probability of obtaining a random subset of size in $[k, 3k]$ in one attempt is at least 
$1-2e^{\frac{-k}{8}}$, which achieves {\em high probability}.

\end{toappendix}

\section{Introduction}

Effective parallel systems for large scale datasets, such as MapReduce~\cite{DBLP:journals/cacm/DeanG08}, Dryad~\cite{DBLP:conf/eurosys/IsardBYBF07}, Spark~\cite{DBLP:conf/hotcloud/ZahariaCFSS10}, Hadoop~\cite{DBLP:books/daglib/0029284}, 
have received considerable attention in recent years. 
The {\em Massively Parallel Computation} (MPC) model~\cite{DBLP:conf/soda/KarloffSV10,DBLP:conf/isaac/GoodrichSZ11,DBLP:conf/pods/BeameKS13} has been proposed to provide a solid theoretical abstraction for the modern study of parallelism.

In this paper, we propose several $O(1)$-round Las Vegas algorithms, each succeeding with high probability, 
in the MPC model, for solving three fundamental problems:
(i) {\em connectivity} and {\em minimum spanning forest} (MSF) on multi-dimensional grid graphs,
(ii) {\em approximate Euclidean Minimum Spanning Tree} (EMST),
and 
(iii) {\em approximate DBSCAN}.
\subparagraph{The MPC Model.}
In the MPC model, the input is of size $n$ and {\em evenly} distributed among $m$ machines. 
Each machine is equipped with a {\em strictly sub-linear} local memory of size $\Theta(s)$, where $s=n/m = n^\alpha$ for some {\em constant} $\alpha \in (0, 1)$ 
~\cite{DBLP:conf/stoc/AndoniNOY14,DBLP:conf/soda/KarloffSV10,DBLP:conf/isaac/GoodrichSZ11}.
An MPC algorithm runs in {\em (synchronous) rounds};
each round proceeds two phases {\em one after another} : the {\em communication phase} first,  and then the {\em local computation phase}.
In the communication phase, each machine can send to and receive from other machines, in total,  $O(s)$ data. 
In the computation phase, each machine performs computation on the $O(s)$ data stored in its local memory,
and prepares what data can be sent to which machine
in the communication phase of the next round. 
The efficiency of an MPC algorithm is measured by the {\em total number of rounds}.

\paragraph*{Main Result 1: Connectivity on Grid Graphs}

We  consider a class of graphs called {\em $d$-dimensional $c$-penetration grid graphs} (for short, $(d,c)$-grid graphs).
Specifically, a graph $G= (V, E)$ is  a $(d,c)$-grid graph satisfies:
(i) the dimensionality $d \geq 2$ is a {\em constant} and $c \geq 1$ is an {\em integer};
(ii) each node in $V$ is a $d$-dimensional point with {\em integer coordinates} in $\mathbb{N}^d$ space;
(iii) for each edge $(u, v) \in E$, $0 < \|u, v\|_\infty \leq c$ holds, 
i.e., $u$ and $v$ are distinct vertices and their coordinates differ by at most $c$ on every dimension.
It can be verified that, when $c$ is a constant, by definition, each node in a $(d, c)$-grid graph can have {\em at most} $(2c + 1)^d - 1 \in O(1)$ neighbours, and hence, a $(d, c)$-grid graph is {\em sparse}, i.e., $|E| \in O(|V|)$. 
In particular, for $c = 1$, a $(d,1)$-grid graph is a common well-defined $d$-dimensional grid graph.

In real-world applications, grid graphs are often useful 
in the modeling of 3D meshes and terrains~\cite{burger2018fast}.
More importantly, grid graphs are  
extremely useful in many {\em algorithmic designs} for solving various computational geometry problems, such as Approximate Nearest Neighbour Search~\cite{arya1998optimal}, Euclidean Minimum Spanning Tree~\cite{indyk2004algorithms}, DBSCAN~\cite{agrawal1998automatic} and etc.
Therefore, the study of algorithms on grid graphs has become an important topic in the research community.
Our first result is this theorem:
\vspace{-1mm}
\begin{theoremapprep}\label{thm:grid-cc}
	Given a $(d, c)$-grid graph $G = (V, E)$ with $c \in O(1)$, 
	there exists a Las Vegas MPC algorithm with local memory {$\Theta(s) \subseteq \Theta(|V|^\alpha)$} per machine, for an arbitrary constant $\alpha \in (\max\{ \frac{d + 1}{d + 2}, \frac{4}{5}\}, 1)$, which computes all the {\em connected components} of $G$ in 
$O(1)$
rounds {\em with high probability}. Specifically, the algorithm assigns the ID of the connected component to each node, of which the node belongs to. 
\end{theoremapprep}
\begin{proof}
This Theorem follows immediately from Theorem~\ref{thm:implicit-connectivity} in Section~\ref{sec:grid-cc}.
\end{proof}

Computing connected components (CC's) of sparse graphs in the MPC model is notoriously challenging.
In particular, the well accepted {\em 2-CYCLE Conjecture}~\cite{yaroslavtsev2018massively,nanongkai2022equivalence} says that:
\begin{quote}
Every MPC algorithm, using $n^{1 - \Omega(1)}$ local memory per machine and $O(n)$ space in total, 
requires $\Omega(\log n)$ rounds to correctly solve, with high probability, the {\em 1-vs-2-Cycles problem} which asks to distinguish whether the input graph consists of just one cycle with $n$ nodes or 
two cycles with $n/2$ nodes each.
\end{quote}
Theoretically speaking, the 2-CYCLE Conjecture can be disproved as ``easy'' as finding just a single constant $\alpha < 1$ 
and an algorithm that solves the 1-vs-2-Cycles problem with $O(n^\alpha)$ per-machine local memory and $O(n)$ total space in $o(\log n)$ rounds.
However, according to the recent hardness results, the conjecture is robust and hard to be refuted~\cite{nanongkai2022equivalence}.

Even worse, since the input instance of the 1-vs-2-Cycles problem is so simple, 
it eliminates the hope of computing CCs for many classes of ``simple'' graphs, 
such as path graphs, tree graphs and planar graphs, 
which are often found to be simpler cases for many fundamental problems. 
As a result, it is still very challenging to find a large class of graphs on which the CC problem can be solved in $o(\log n)$ rounds. 

The existing state-of-the-art works~\cite{DBLP:conf/focs/AndoniSSWZ18, DBLP:conf/podc/AssadiSW19,DBLP:conf/focs/BehnezhadDELM19,
DBLP:conf/stoc/CoyC22} 
consider general sparse graphs that are parameterised by the {\em diameter} $D$ and the {\em spectral gap} $\lambda$.
Specifically, Andoni et al.~\cite{DBLP:conf/focs/AndoniSSWZ18} propose a~$O(\log D\log\log n)$-round randomized algorithm, where
$n$ is the number of nodes. 
Assadi et al.~\cite{DBLP:conf/podc/AssadiSW19} give an algorithm with $O(\log \log n+\log \frac{1}{\lambda})$ rounds.
Recently, Behnezhad et al.~\cite{DBLP:conf/focs/BehnezhadDELM19} improve the bound by Andoni et al.\ to $O(\log D + \log\log n)$ randomized, 
while Coy and Czumaj~\cite{DBLP:conf/stoc/CoyC22} propose a deterministic MPC algorithm achieving the same round number bound.
However, all these state-of-the-art general algorithms still require $O(\log n)$ rounds for solving the CC problem on $(d,c)$-grid graphs 
in the worst case, as the diameter $D$ of a $(d,c)$-grid graph can be as large as $n$
and the spectral gap, 
$\lambda$, can be as small as $\frac{1}{n}$.

Therefore, our Theorem~\ref{thm:grid-cc} is not only an immediate improvement over these state-of-the-art known results the CC problem on $(d,c)$-grid graphs, 
but, interestingly, also suggests a large class of sparse graphs, $(d,c)$-grid graphs with $c \in O(1)$,
on which the CC problem can be solved 
in $O(1)$ rounds.

We note that it is the {\em geometric property} of $(d,c)$-grid graphs making them admit $O(1)$-round CC algorithms.
As we shall discuss in Section~\ref{sec:grid-cc},
our key technique is the so-called {\em Orthogonal Vertex Separator} for multi-dimensional grid graphs proposed by Gan and Tao~\cite{DBLP:journals/jgaa/GanT18}
which was originally proposed to efficiently compute CC's on $(d, 1)$-grid graphs in {\em External Memory} model~\cite{DBLP:journals/cacm/AggarwalV88}.
However, in the MPC model, it appears difficult to compute such separators in $O(1)$ rounds.
To overcome this, we relax the Orthogonal Vertex Separator to 
a ``{\em weaker yet good-enough}''
version for our purpose of designing $O(1)$-round MPC algorithms and  
extend this technique~\footnote{
A similar technique of using small-size geometric separators (though they are not the same as ours) is applied to computing Contour Trees on terrains~\cite{nath2016massively} in $O(1)$ rounds in the MPC model.
Moreover, the idea of a crucial technique in our separator computation, 
$\varepsilon$-approximation, 
actually stems from the KD-tree construction in the MPC model in~\cite{DBLP:conf/pods/AgarwalFMN16}.   
}
to $(d,c)$-grid graphs which are crucial to our $O(1)$-round EMST and DBSCAN algorithms.

Moreover, with standard modifications, our CC algorithm can also compute the {\em Minimum Spanning Forests} (MSF)  on $(d,c)$-grid graphs:
\begin{corollaryapprep}\label{cor:grid-msf}
	Given an {\em edge-weighted} $(d,c)$-grid graph $G = (V, E)$ with $c \in O(1)$,
	there exists a Las Vegas MPC algorithm with local memory {$\Theta(s) \subseteq \Theta(|V|^\alpha)$} per machine, for arbitrary constant $\alpha \in (\max\{\frac{d+ 1}{d + 2}, \frac{4}{5}\}, 1)$, 
	which computes a {\em Minimum Spanning Forest} in 
$O(1)$ 
rounds with high probability.
\end{corollaryapprep}
\begin{proof}
This Corollary follows immediately from Theorem~\ref{thm:implicit-connectivity} in Section~\ref{sec:grid-cc}.
\end{proof}

\paragraph*{Main Result 2: Approximate EMST}

\noindent
By enhancing our MSF technique for $(d,c)$-grid graphs, 
our second main result is a new $O(1)$-round Las Vegas MPC algorithm for computing approximate {\em Euclidean Minimum Spanning Tree} (EMST).
Specifically, the problem is defined as follows.
Given a set, $P$, of $n$ points with integer coordinates in $d$-dimensional space $\mathbb{N}^d$, where $d$ is a constant and the coordinate values are in $[0, \Delta]$ with $\Delta = n^{O(1)}$,
let $G = (V, E)$ be an {\em implicit} edge-weighted complete graph on $P$ such that: $V = P$ and 
the weight of each edge $(u,v)\in E$ is the Euclidean distance between $u$ and $v$.
The goal of the EMST problem is to compute an MST on the implicit complete graph, $G$, of $P$.
The EMST problem is one of the most fundamental and important problems in computational geometry and 
has sparked significant attention for machine learning~\cite{DBLP:journals/corr/abs-1810-04805,DBLP:conf/cvpr/HeZRS16,DBLP:journals/corr/abs-1301-3781}, data clustering~\cite{DBLP:journals/corr/abs-2307-07848,DBLP:conf/spaa/CoyCM23} and optimization problems~\cite{DBLP:conf/soda/CzumajLMS13,DBLP:conf/approx/Monemizadeh23}. 
The existing state-of-the-art result is an $O(1)$-round {\em Monte-Carlo} MPC algorithm proposed by Andoni et al.~\cite{DBLP:conf/stoc/AndoniNOY14} for computing an approximate EMST. 
Their algorithm possesses the following properties:
(i) it can achieve $O(1)$ rounds in the worst case and works for all constant $\alpha \in (0, 1)$; and 
(ii) {\em in expectation}, it returns a spanning tree $T$ of the implicit graph $G$ whose total edge weight is at most $(1 + \rho)$ times of the total edge weight of the exact EMST $T^*$, where the approximation factor $\rho > 0$ is a {\em constant}.
In comparison, we have the following theorem:
\begin{theoremapprep}\label{thm:emst}
	Given a set $P$ of $n$ points in $d$-dimensional space $\mathbb{N}^d$ with coordinate values in $[0, \Delta]$ for $\Delta = n^{O(1)}$ and a constant approximation factor $\rho > 0$, 
	there exists a Las Vegas MPC algorithm with local memory $\Theta(s) \subseteq \Theta(n^\alpha)$ per machine, for arbitrary constant $\alpha \in (\max\{\frac{d + 1}{d+2}, \frac{4}{5}\}, 1)$, which computes a spanning tree $T$ of the implicit graph $G$ of $P$ in 
 $O(1)$
 rounds with high probability 
	and $T$ {\em deterministically} satisfyies the following: 
	\begin{itemize}
		\item {\bf Overall Weight Approximation}: the total edge weight of $T$ is at most $(1 + \rho)$ times the total edge weight of the exact EMST $T^*$;
		\item {\bf Edge-Wise Weight Approximation}: for every edge $(u,v)$ in $T^*$ with weight $w(u,v)$, 
there must exist a path from $u$ to $v$ in $T$ such that the \underline{maximum weight} of the edges on this path is at most $(1 + \rho) \cdot w(u,v)$. 
\end{itemize}
\end{theoremapprep}
\begin{proof}
This Theorem follows from all the analysis and lemmas in Section~\ref{sec:emst}.
\end{proof}

Our algorithm improves Andoni et al.'s approximate EMST algorithm in two folds.
First, our algorithm computes a feasible $\rho$-approximate EMST deterministically, while theirs can only guarantee this in expectation.
Second, and most importantly, the approximate EMST computed by our algorithm can further guarantee the edge-wise approximation which Andoni et al.'s algorithm cannot achieve. 
We note that the edge-wise approximation guarantee is crucial to solving many down-stream algorithmic problems, such as {\em $\rho$-approximate bichromatic closest pair}~\cite{agarwal1990euclidean}, Single-Linkage Clustering~\cite{zhou2011clustering} and other related problems~\cite{xue2019colored,ding2011solving}.

\paragraph*{Main Result 3: Approximate DBSCAN}

\noindent
Our third
main result is a $O(1)$-round MPC algorithm for computing $\rho$-approximate DBSCAN clustering as defined in~\cite{gan2015dbscan}.
To our best knowledge, this is the first $O(1)$-round algorithm in the MPC model for solving the DBSCAN problem.

\begin{theoremapprep}\label{thm:dbscan}
	Given a set $P$ of $n$ points in $d$-dimensional space $\mathbb{R}^d$, where $d$ is a constant,
	two DBSCAN parameters: $\eps$ and $\minPts$, and a constant approximation factor $\rho > 0$,
	there exists a Las Vegas MPC algorithm with local memory {$\Theta(s) \subseteq \Theta(n^\alpha)$} per machine, for arbitrary constant $\alpha \in (\max\{\frac{d + 1}{d+2}, \frac{4}{5}\}, 1)$, which computes 
	an $\rho$-approximate DBSCAN clustering of $P$ in 
 {$O(1)$} rounds with high probability.
\end{theoremapprep}
\begin{proof}
This Theorem follows immediately from all the analysis and lemmas in Section~\ref{sec:ourMPCalgos}.
\end{proof}

\subparagraph{Derandomization}
As we shall see shortly, the randomness of all our Las Vegas algorithms only comes from the {\em randomized computation} of an {\em $\frac{1}{r}$-approximation} for some $r$.
Indeed, it is known that such approximations can be computed in $O(1)$ MPC rounds deterministically~\cite{DBLP:conf/pods/AgarwalFMN16}.
And therefore, all our algorithms can be {\em derandomized}, yet slightly shrinking the feasible value range of the local memory parameter, $\alpha$, to $(\max\{\frac{d + 1}{d+2}, \frac{6}{7}\}, 1)$.

In this paper, 
the proofs of all the theorems, lemmas and claims marked with \emph{\mainbodyrepeatedtheorem} can be found in Appendix.

\section{Grid Graph Connectivity and MSF}
\label{sec:grid-cc}

In this section, we introduce our $O(1)$-round MPC algorithms for computing CC's and MSF's on $(d,c)$-grid graphs with $c \in O(1)$ with high probability.

\subparagraph{Implicit Graphs and Implicit $(d,c)$-Grid Graphs.} 
For the ease of explanation of our techniques for solving the EMST problem in the next section, 
we first introduce the notion of \textit{implicit graphs}, denoted by $G=(V,\mathcal{E})$.
\vspace{-1mm}
\begin{definition}[Implicit Graph]
	An {\em implicit graph} $G = (V, \mathcal{E})$ is a graph consisting of a {\em node set} $V$ and an {\em edge formation rule}, $\mathcal{E}$, where:
\begin{itemize}
	\item each node $u$ in $V$ is associated with $O(1)$ words of information,
 e.g., $O(1)$-dimensional coordinates;
	\item the edge formation rule, $\mathcal{E}$, is a function (of size bounded by $O(1)$ words) on distinct node pairs:
		for any two distinct nodes  $u, v \in V$, $\mathcal{E}(u,v)$ returns, based on the $O(1)$-word information associated with $u$ and $v$: (i) whether edge $(u,v)$ exists in $G$, and (ii) the weight of $(u,v)$ if edge $(u,v)$ exists and the weight is well defined. 
\end{itemize}
\end{definition}

By the above definition, the size of every implicit graph is bounded by $O(|V|)$.
An implicit graph can be converted to an {\em explicit graph} by materialising 
all the edges with the edge formation rule $\mathcal{E}$.
If the corresponding explicit graph of an implicit graph $G=(V,\mathcal{E})$ is a $(d,c)$-grid graph, 
we say $G=(V, \mathcal{E})$  is an {\em implicit $(d,c)$-grid graph}.
For example, given a set $P$ of $n$ points in $\mathbb{N}^d$ with coordinate values in $[0, \Delta]$, 
the {\em implicit complete Euclidean graph} $G_{\text{comp}}$ of $P$ can be considered as an implicit $(d, \Delta)$-grid graph $G = (P, \mathcal{E})$, 
where $\mathcal{E}(u,v)$ returns $dist(u,v)$, the Euclidean distance between $u$ and $v$, as the edge weight.
Another example is the unit-disk graph~\cite{clark1990unit} in $\mathbb{N}^d$ space with $l_\infty$-norm, which can be essentially regarded as the {\em implicit $(d,1)$-grid graph} with an edge formation rule: 
there exists an edge between two vertices if and only if their $l_\infty$-norm distance is at most~$1$.
However, as we will see in our EMST algorithm, the notion of implicit $(d,c)$-grid graphs is more general 
due to the flexibility on the edge formation rule specification.
For example, 
in addition to the coordinates, 
each node can also bring an ID of the connected component that it belongs to, 
and the edge formation rule can further impose a rule that there must be no edge between two nodes having the same connected component ID.

\subsection{Pseudo $s$-Separator and Its Existence}

\begin{figure}[t]
  \begin{center}
  \includegraphics[scale=0.16]{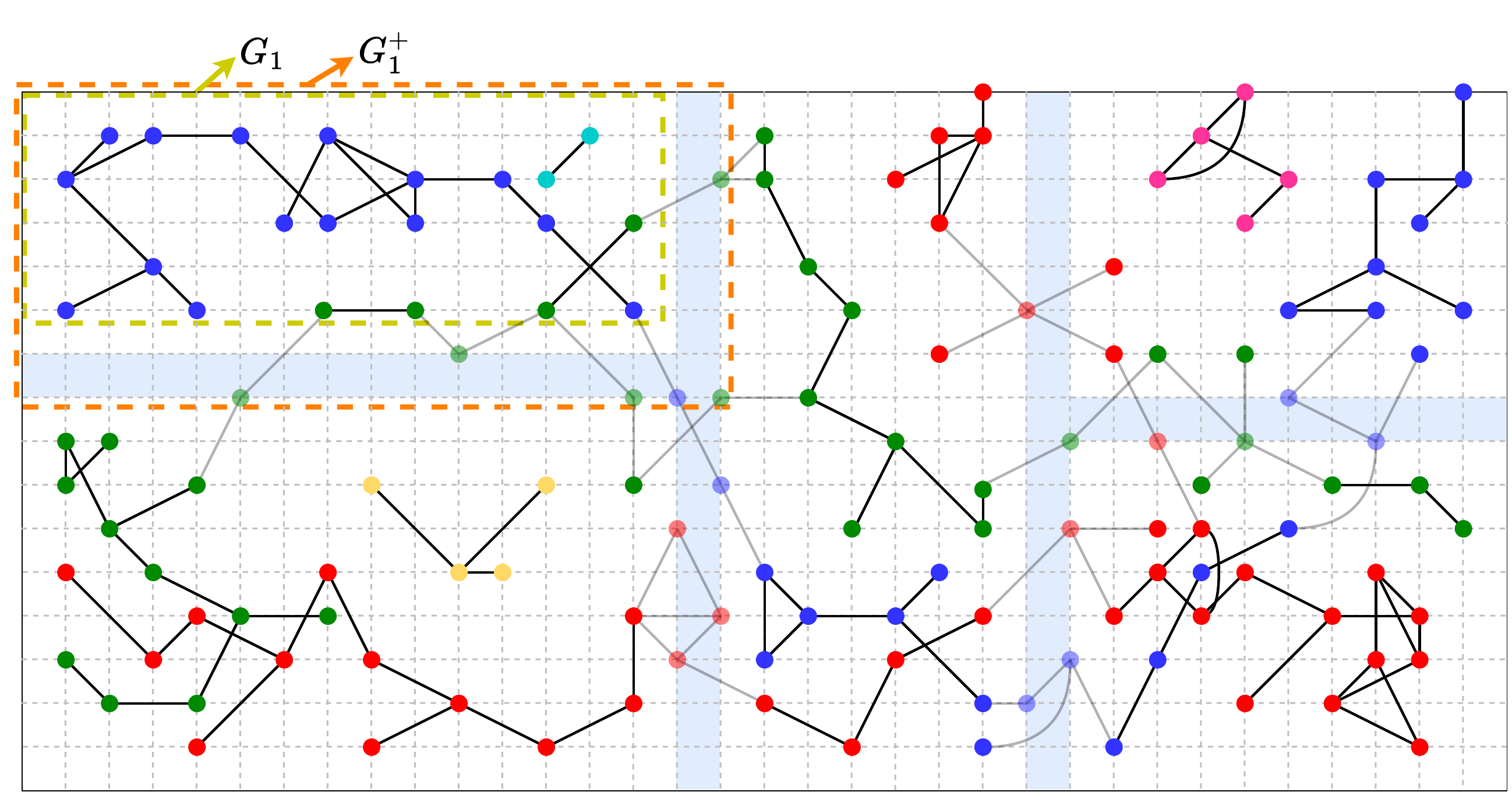}
  \end{center}
  \vspace{-3mm}
  \caption{A $(d,c)$-grid graph $G$ with $d =2$ and $c=2$ and a separator.
 The blue regions are the target $c$-dividers: the vertices in them are   
	  separator vertices.
	  Removing all the separator vertices disconnects $G$ into $5$ sub-graphs.
  }
  \label{fig:grid-sep}
  \vspace{-5mm}
\end{figure}

In this subsection, we extend the notion of {Orthogonal Vertex Separator}~\cite{DBLP:journals/jgaa/GanT18} to a {\em weaker yet good-enough} version for our algorithm design purpose for solving the CC and MSF problems on implicit $(d,c)$-grid graphs in the MPC model. 
We call this new notion a {\em Pseudo $s$-Separator}; recall that $O(s)$ is the size of local memory, per machine, in the MPC model:

\begin{definition}[Pseudo $s$-Separator]\label{def:s-sep}
	Consider an implicit $(d,c)$-grid graph $G=(V, \mathcal{E})$ and a parameter  $s<|V|$.
	A set $S \subseteq V$ is a {\em pseudo $s$-separator} of $G$ if it satisfies:  
	\begin{itemize}
	\item $|S| \in O(\frac{c \cdot |V|}{s^{1/d}} \cdot \log s)$
	\item Removing the vertices in $S$ disconnects $G$ 
		into $h\in O(|V|/s)$ sub-graphs $G_i=(V_i,\mathcal{E})$, $i \in [1,h]$, such that $V_i \cap V_j = \emptyset$ for all $i\neq j$, $V_i \subset V$ and $|V_i| \in O(s)$.
\end{itemize}
\end{definition}

\bfrev{
\begin{definition}[Pseudo $s$-Separator]\label{def:s-sep}
	Consider an implicit $(d,c)$-grid graph $G=(V, \mathcal{E})$ with $1 \leq c \leq s^{\frac{1}{d^3}}$ and $|V| > s$.
	A set $S \subseteq V$ is a {\em pseudo $s$-separator} of $G$ if it satisfies:  
	\begin{itemize}
	\item $|S| \in O(\frac{c \cdot |V|}{s^{1/d}} \cdot \log s)$
	\item Removing the vertices in $S$ disconnects $G$ 
		into $h\in O(|V|/s)$ sub-graphs $G_i=(V_i,\mathcal{E})$, $i \in [1,h]$, such that $V_i \cap V_j = \emptyset$ for all $i\neq j$, $V_i \subset V$ and $|V_i| \in O(s)$.
\end{itemize}
\end{definition}
}
Figure~\ref{fig:grid-sep} shows an example. Next, we show that if $1 \leq c \leq s^{\frac{1}{d^3}}$ and 
$|V| >  s$, a pseudo $s$-separator must 
exist in an implicit $(d,c)$-grid graph $G=(V, \mathcal{E})$. 
While our proof mimics that in~\cite{DBLP:journals/jgaa/GanT18}, it does require some new non-trivial ideas. 
We introduce some useful notations.

\subparagraph{Minimum Bounding Space of $G$.}
Consider an implicit $(d,c)$-grid graph $G= (V, \mathcal{E})$;
let $[x^{(i)}_{\min}, x^{(i)}_{\max}]$ be the value range of $V$ on the $i$-th dimension (for $i \in \{1, 2, \ldots, d\}$), 
where $x^{(i)}_{\min}$ and $x^{(i)}_{\max}$ are the {\em smallest} and {\em largest} coordinate values of the points in $V$ on dimension~$i$, respectively. The {\em minimum bounding space} of $G$ 
is defined as 
$\text{mbr}(V) = [x^{(1)}_{\min}, x^{(1)}_{\max}] \times [x^{(2)}_{\min}, x^{(2)}_{\max}] \times  \cdots \times [x^{(d)}_{\min} , x^{(d)}_{\max}]$.
Moreover, we denote the projection of $\text{mbr}(V)$ on the $i$-th dimension by $\text{mbr}(V)[i] = [x^{(i)}_{\min}, x^{(i)}_{\max}]$.

\subparagraph{$c$-Divider.}
Consider the minimum bounding space $\text{mbr}(V)$;
a {\em $c$-divider} on the $i$-th dimension at coordinate $x$, denoted by $\pi(i, x)$, is an
axis-parallel rectangular range whose projection on dimension $i$ is $[x, x+c -1]$ and the projection on dimension $j \neq i$ is $\text{mbr}(V)[j]$.
With a $c$-divider $\pi(i,x)$, the vertex set $V$ is partitioned into three parts:
(i) the {\em left part} $ V_{\text{left}} = \{ u\in V \mid u[i] \leq x -1\}$, 
(ii) the {\em right part} $V_{\text{right}} = \{ u \in V \mid u[i] \geq x + c\}$, 
and
(iii) the {\em boundary part} $S_{\pi} = V \cap \pi(i,x)$,
where $u[i]$ is the $i^\text{th}$ coordinate of a vertex $u$.

We say that $\pi(i,x)$ {\em separates} $G$ into two sub-graphs induced by the left and right part, i.e., $V_{\text{left}}$ and $V_{\text{right}}$, respectively.
And all the vertices in $S_{\pi}$ are called {\em separator vertices}.
Observe that for any $u \in V_{\text{left}}$ and any $v \in V_{\text{right}}$, 
$|u[i] - v[i]| \geq c + 1$ implying 
$\|u,v\|_\infty \geq c + 1$.
As a result,
there must not exist any edge between such~$u$ and~$v$.
With such notation, we show the following Binary Partition Lemma.

\begin{lemma}[Binary Partition Lemma]\label{lmm:bin-partition}
	Consider an implicit $(d,c)$-grid graph $G = (V, \mathcal{E})$ with 
	$1 \leq c \leq s^{\frac{1}{d^3}}$ and $|V| > s$; 
	there exists a $c$-divider $\pi(i,x)$ partitioning $V$ into $\{S_{\pi}, \{V_{\text{left}}, V_{\text{right}}\}\}$ such that:
		\begin{itemize}
			\item $|V_{\text{left}}| \geq \frac{|V|}{4(d+1)}$ and $|V_{\text{right}}| \geq \frac{|V|}{4(d+1)}$, and 
			\item $|S_{\pi}| \leq 2c(1 + d) ^\frac{1}{d} |V|^{1 - \frac{1}{d}}$. 
		\end{itemize}
\end{lemma}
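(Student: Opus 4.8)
The plan is to follow the classical ``find a balanced cut along some axis'' argument, adapted to $c$-dividers rather than single hyperplanes. First I would fix, for each dimension $i \in \{1,\dots,d\}$, a ``middle band'' of candidate coordinates: those $x$ for which the $c$-divider $\pi(i,x)$ leaves at least $\frac{|V|}{4(d+1)}$ vertices strictly on each side. Concretely, let $a_i$ be the smallest coordinate such that $|\{u : u[i] \le a_i\}| \ge \frac{|V|}{4(d+1)}$ and $b_i$ the largest such that $|\{u : u[i] \ge b_i\}| \ge \frac{|V|}{4(d+1)}$; then any $x$ with $a_i < x$ and $x + c - 1 < b_i$ gives a balanced split. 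I would argue that this admissible window on dimension $i$ is either empty or has width $\ell_i := b_i - a_i - c$ (roughly), and crucially that in every dimension where the window is non-empty, at least one choice of $x$ there yields a small boundary. The existence of at least one dimension with a ``wide enough'' admissible window is where the constant $4(d+1)$ is chosen: since fewer than $\frac{|V|}{4(d+1)}$ points sit below $a_i$ and fewer than $\frac{|V|}{4(d+1)}$ above $b_i$, a counting argument over all $d$ dimensions forces some dimension's admissible window to contain at least a $\frac{1}{2d}$-ish fraction of the point-spread — and this is the step where the hypothesis $c \le s^{1/d^3}$ enters, to guarantee the window is wide enough to even fit a $c$-divider.

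Second, on that chosen dimension $i^\star$, I would bound the boundary size by an averaging argument. Slide a $c$-divider across the admissible window: the windows $\pi(i^\star, x)$ for the (at least) $N := \lceil \ell_{i^\star} / c\rceil$ disjoint positions $x = a_i^\star + 1, a_i^\star + 1 + c, a_i^\star + 1 + 2c, \dots$ partition a subset of $V$ into $N$ pairwise-disjoint bands, so by pigeonhole at least one band $S_\pi$ has $|S_\pi| \le |V| / N \le c|V|/\ell_{i^\star}$. Combining with the guarantee that $\ell_{i^\star}$ is a constant fraction (in $d$) of the relevant spread, and then relating that spread to $|V|^{1/d}$ — because if \emph{every} dimension's spread were small, the whole point set would fit in a box of volume $o(|V|)$ and hence could contain fewer than $|V|$ integer points, a contradiction — I would get $\ell_{i^\star} \in \Omega\!\big(|V|^{1/d}/d^{O(1)}\big)$, yielding $|S_\pi| \le 2c(1+d)^{1/d}|V|^{1-1/d}$ after tracking constants. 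The factor $(1+d)^{1/d}$ and the clean form of the bound suggest the volume argument is done dimension-by-dimension: pick the dimension maximizing the admissible-window spread, and if that spread is $W$, then the product of all $d$ spreads (each at least the max over $4(d+1)$, say) must be $\ge |V|$ roughly, forcing $W \gtrsim |V|^{1/d}$.

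The main obstacle I anticipate is making the ``$c$ is small enough to fit inside the admissible window'' and the ``pigeonhole over disjoint bands gives a small band'' steps simultaneously tight enough to hit the stated bound $2c(1+d)^{1/d}|V|^{1-1/d}$ with the precise constant. In particular, one must be careful that the admissible window (width roughly a $\frac{1}{d}$-fraction of a spread of size $\Omega(|V|^{1/d})$) has width at least, say, $2c$ or so, otherwise only one or two band-positions fit and the pigeonhole gives nothing better than $|V|/2$; this is exactly what $c \le s^{1/d^3} < |V|^{1/d^3} \ll |V|^{1/d^2}$ buys us, since the window width is $\Omega(|V|^{1/d}/d)$ and $c$ is polynomially smaller. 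A secondary subtlety is the boundary-case bookkeeping: when some dimension has \emph{no} admissible window (all balanced cuts blocked), one must verify that this cannot happen in \emph{every} dimension — again a counting argument, using that the total ``missing mass'' near the two extremes across all $d$ dimensions is at most $2d \cdot \frac{|V|}{4(d+1)} < \frac{|V|}{2}$, so a majority of vertices lie in the intersection of all middle bands and some dimension must genuinely separate them. I would also double-check that the two halves $V_{\text{left}}, V_{\text{right}}$ each still have the claimed $\frac{|V|}{4(d+1)}$ lower bound even after we commit to a specific $x$ inside the admissible window, which holds by the very definition of the window. Finally, I would remark that, unlike the separator in~\cite{DBLP:journals/jgaa/GanT18}, we do not need the band positions to be computable or aligned to any particular grid — we only need \emph{existence} — which is what makes the ``pseudo'' relaxation cleaner here.
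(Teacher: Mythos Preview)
Your proposal is essentially the paper's own argument: define per-dimension ``middle'' thresholds, use a volume/pigeonhole argument on the resulting central box (since it contains $\Omega(|V|)$ integer points, some side has length $\Omega(|V|^{1/d})$), then average over slab positions along that dimension to find a light $c$-divider. Two minor differences worth noting: (i) the paper sets the thresholds at $\tfrac{|V|}{2(d+1)}$ rather than your $\tfrac{|V|}{4(d+1)}$, which makes the central box contain exactly $\tfrac{|V|}{d+1}$ points and is what produces the clean constant $(1+d)^{1/d}$; the balance bound $\tfrac{|V|}{4(d+1)}$ then falls out afterward by subtracting the already-bounded $|S_\pi|$ in the boundary case $x=y_i$; and (ii) the paper averages over \emph{all} (overlapping) positions of the $c$-divider in the window, using that each vertex lies in at most $c$ of them, rather than pigeonholing over disjoint bands --- but both yield the same bound $c|V|/(\text{window width})$.
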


\begin{proof}
We prove this lemma with a constructive algorithm.
Given the vertex set, $V$, our algorithm first finds two integers, $y_j$ and $z_j$, for each dimension $j\in [1,d]$. 
Specifically, $y_j$ is the {\em largest} integer that the number of the vertices in $V$ located to the ``left'' of $y_j$ is at most $\frac{|V|}{2(1+d)}$, 
while $z_j$ is the {\em smallest} integer that the number of the vertices in $V$ located to the ``right'' of $z_j$ is at most $\frac{|V|}{2(1+d)}$. 
Formally, $y_j=\max\{x\in\N \,:\, |\{v\in V \mid v[j]<x\}|\leq \frac{|V|}{2(1+d)}\}$ and $z_j=\min\{x\in\N \,:\, |\{v\in V \mid v[j]>x\}|\leq \frac{|V|}{2(1+d)}\}$,
where $v[j]$ is the coordinate of $v$ on dimension $j$.

Consider the hyper-box in $\N^d$, whose projection on each dimension is $[y_j,z_j]$ for $j \in [1, d]$. 
By the definition of $y_j$ and $z_j$, 
this box 
contains in total at least $|V|(1-\frac{2d}{2(d+1)}) = \frac{|V|}{d+1}$ vertices in $V$. 
Since the coordinates of the points are all integers, 
the volume of the box must be at least $\frac{|V|}{d+1}$. That is, $\prod_{j=1}^{d}(z_j-y_j+1)\geq \frac{|V|}{d+1}$. 
Thus, there must exist some dimension $i$ such that 
$z_i-y_i+1 \geq (\frac{|V|}{1+d})^{\frac{1}{d}}$. 
Next we fix this dimension,~$i$.
Let $S_{\pi(i,x)}$ be the set of all vertices in $V$ falling in a $c$-divider $\pi(i,x)$.
Within dimension $i$, each vertex in $V$ can be contained in at most $c$ {\em consecutive} $c$-dividers.
As a result, we have $\sum_{x'\in[y_i,z_i-c]} |S_{\pi(i,x')}| \leq c|V|$. 
Since there are $z_i-c-y_i+1$ terms in the summation, 
 there exists an integer $x \in [y_i,z_i-c]$ with 
	\begin{equation*}
|S_{\pi(i,x)}| \leq \frac{c|V|}{z_i-y_i+1 -c} \leq \frac{c|V|}{(\frac{|V|}{1+d})^{\frac{1}{d}}-c}\leq2c(1+d)^{\frac{1}{d}}|V|^{1-\frac{1}{d}}\,,
\end{equation*}
\noindent
where
the second inequality comes from the bound on~$z_i-y_i+1$ and the third inequality is due to $c \leq s^{\frac{1}{d^3}} \leq \frac{1}{2}\cdot (\frac{|V|}{(1 + d)})^{\frac{1}{d}}$ as $|V| > s$.
Then such $\pi(i, x)$ is a desired $c$-divider. 

Next, we bound the size of $V_{\text{left}}$ with respect to $\pi(i, x)$: bounding $|V_{\text{right}}|$ is analogous. 
Since $x \geq y_i$, there are only two possible relations between $x$ and $y_i$:
\begin{itemize}
    \item If $x >  y_i$, according to the definition of $y_i$, the number of vertices in $V$ to the left of $\pi$ is at least $|V|(\frac{1}{2d+2})\geq \frac{|V|}{4d+4}$.
    \item If $x=y_i$, the number of vertices in $V$ to the left of $\pi$ is at least $\frac{|V|}{2d+2}-|S_{\pi(i,x)}|\geq |V|\left(\frac{1}{2d+2}-\frac{2c(d+1)^{\frac{1}{d}}}{|V|^{\frac{1}{d}}}\right)\geq \frac{|V|}{4d+4}$.
\end{itemize}
Either way, $|V_{\text{left}}| \geq \frac{|V|}{4(d+1)}$ holds;
Lemma~\ref{lmm:bin-partition} thus follows.
\end{proof}

\subparagraph
{The Multi-Partitioning Algorithm.} Given an implicit $(d,c)$-grid graph $G = (V, \mathcal{E})$ with $1 \leq c \leq s^{\frac{1}{d^3}}$ and $|V| > s$, initialize $S \leftarrow \emptyset$ and perform the following steps:
\begin{itemize}
\item apply Lemma~\ref{lmm:bin-partition} on $G$ to obtain $\{S_{\pi}, \{V_{\text{left}}, V_{\text{right}}\}\}$;
		\item $S \leftarrow S \cup S_{\pi}$;
		\item if $|V_{\text{left}}| > s$, recursively apply Lemma~\ref{lmm:bin-partition} on $G_{\text{left}} = (V_{\text{left}}, \mathcal{E})$;
		\item if $|V_{\text{right}}| > s$, recursively apply Lemma~\ref{lmm:bin-partition} on $G_{\text{right}} = (V_{\text{right}}, \mathcal{E})$;
\end{itemize}

\begin{lemmaapprep}\label{lmm:s-sep-ext}
	The vertex set $S$, obtained by the above \emph{Multi-Partitioning Algorithm}, is a pseudo $s$-separator of the implicit $(d,c)$-grid graph $G = (V, \mathcal{E})$, where $1 \leq c \leq s^{\frac{1}{d^3}}$ and $|V| > s$. 
\end{lemmaapprep}
\begin{proof}
Let $V_1, V_2, \ldots, V_h$ be the vertex sets of the resulting sub-graphs from the multi-partitioning algorithm.
By definition, $V_1, V_2, \ldots V_h$ form a partition of $V \setminus S$ and, hence, are mutually disjoint.
Second, by the recursion condition and Lemma~\ref{lmm:bin-partition}, 
we know that $\frac{s}{4d + 4} \leq |V_i| \leq s$ holds for all $i \in [1, h]$, and therefore, $h \in O(\frac{|V|}{s})$, which is
 the second bullet of Definition~\ref{def:s-sep}.

Next, we bound the size of $S$. The algorithm essentially constructs a {\em conceptual} space-partitioning binary tree $\mathcal{T}$, 
where each internal node represents a $c$-divider,
and each leaf node represents an induced sub-graph $V_i$, for $i \in [1, h]$.
According to Lemma~\ref{lmm:bin-partition}, we know that the size of each such $c$-divider is at most 
$\frac{2c(1+d)^{\frac{1}{d}} |V'|}{|V'|^{\frac{1}{d}}}\leq 
\frac{2c(1+d)^{\frac{1}{d}} |V'|}{s^{\frac{1}{d}}}$, where $V'$ is vertex set of the sub-graph the $c$-divider separates.
The sum of all $|V'|$ terms at the same level is at most $|V|$, and the level number is at most $\log_{\frac{4d + 3}{4d+4}} {\frac{|V|}{s/(4d + 4)}} \in  O(\log s)$ because $s = |V|^{\alpha}$ for some constant $\alpha \in (0,1)$.
Therefore, 
$|S|$ is bounded by $O(\frac{c \cdot |V|}{s^{1/d}} \cdot \log s)$.
Lemma~\ref{lmm:s-sep-ext} thus follows.
\end{proof}
A construction algorithm proves existence, hence:
\begin{theorem}
For any implicit $(d,c)$-grid graph $G = (V, \mathcal{E})$ with $1 \leq c \leq s^{\frac{1}{d^3}}$ and $|V| > s$,
a pseudo $s$-separator $S$ of $G$ must exist.
\end{theorem}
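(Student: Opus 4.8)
The plan is to observe that this statement is an immediate corollary of Lemma~\ref{lmm:s-sep-ext}: since the \emph{Multi-Partitioning Algorithm} is a concrete procedure that takes $G$ as input and outputs a vertex set $S$, and Lemma~\ref{lmm:s-sep-ext} certifies that this output is always a pseudo $s$-separator of $G$, existence follows by simply running the algorithm. So the ``proof'' is essentially a one-line appeal to the construction, with the only loose end being that the recursive procedure is well-defined and halts.

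First I would argue termination. Each invocation of Lemma~\ref{lmm:bin-partition} on a current vertex set $V'$ with $|V'| > s$ splits it into $V_{\text{left}}$ and $V_{\text{right}}$ with $|V_{\text{left}}|, |V_{\text{right}}| \geq \frac{|V'|}{4(d+1)}$, hence $|V_{\text{left}}|, |V_{\text{right}}| \leq |V'| - \frac{|V'|}{4(d+1)} = \frac{4d+3}{4d+4}|V'|$. Thus along any root-to-leaf branch the size shrinks geometrically by the constant factor $\frac{4d+3}{4d+4} < 1$, so after $O(\log_{(4d+4)/(4d+3)}(|V|/s)) = O(\log s)$ levels every branch reaches a set of size at most $s$ and the recursion stops. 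In particular the conceptual tree $\mathcal{T}$ from the proof of Lemma~\ref{lmm:s-sep-ext} is finite, so the algorithm outputs a well-defined finite set $S$.

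Then I would invoke Lemma~\ref{lmm:s-sep-ext} verbatim: under the hypotheses $1 \leq c \leq s^{1/d^3}$ and $|V| > s$, the set $S$ produced by the \emph{Multi-Partitioning Algorithm} satisfies both bullets of Definition~\ref{def:s-sep} --- the mutually disjoint sub-graphs $G_1, \ldots, G_h$ each have $|V_i| \in O(s)$ with $h \in O(|V|/s)$, and $|S| \in O(\frac{c \cdot |V|}{s^{1/d}} \cdot \log s)$. Therefore $S$ is a pseudo $s$-separator of $G$, and since such an $S$ has been exhibited, one must exist. I do not anticipate any genuine obstacle here, as all the technical content --- the balanced split and the size bound on the boundary set --- has already been discharged in Lemmas~\ref{lmm:bin-partition} and~\ref{lmm:s-sep-ext}; the only thing worth stating explicitly is the geometric-shrinkage argument above that guarantees the construction terminates.
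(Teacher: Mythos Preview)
Your proposal is correct and matches the paper's own argument exactly: the paper states the theorem immediately after the sentence ``A construction algorithm proves existence, hence:'', so existence is simply read off from Lemma~\ref{lmm:s-sep-ext}. The termination/depth bound you spell out is precisely the $O(\log s)$ level count already established inside the proof of Lemma~\ref{lmm:s-sep-ext}, so nothing new is needed.
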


\subparagraph{Remark.}
While our Pseudo $s$-Separator (PsSep) looks similar to the notation of the Orthogonal Vertex Separator (OVSep) proposed by Gan and Tao~\cite{DBLP:journals/jgaa/GanT18}, 
we emphasize that 
some crucial differences between them are worth noticing.
First, 
the OVSep was originally proposed for solving problems on $(d,1)$-grid graphs (in the External Memory model). Our PsSep supports implicit $(d, c)$-grid graphs for parameter $c$ up to a non-constant $s^{\frac{1}{d^3}}$ in the MPC model.
This extension, discussed in Section~\ref{sec:emst}, plays a significant role in our $O(1)$-round algorithm for computing Approximate EMST's. 
Second and most importantly,
it is still unclear how (and appears difficult) to compute the OVSep in $O(1)$ MPC rounds.
In contrast, our PsSep is proposed to overcome this technical challenge and can be computed in $O(1)$ MPC rounds.
However, as discussed in the next sub-section, 
because of the application of $\eps$-approximation in the construction algorithm of PsSep,
the separator size at each recursion level no longer geometrically decreases 
and therefore,  
leading to a logarithmic blow-up in the overall separator size in PsSep. 
Despite of this size blow-up, by setting the parameters carefully, we show that our PsSep can still fit in the local memory of one machine, 
and thus, it is still {\em good enough} for the purpose of computing CC's on implicit $(d,c)$-grid graphs and solving the Approximate EMST and Approximate DBSCAN problems in $O(1)$ rounds in the MPC model.

\subsection{$O(1)$-Round Pseudo $s$-Separator Algorithm}
\label{subsec:pseudoseparator}
In this subsection, we show a $O(1)$-round MPC algorithm for computing a pseudo $s$-separator for implicit $(d,c)$-grid graphs with $1\leq c\leq s^{\frac{1}{d^3}}$.
The Multi-Partitioning Algorithm proves the existence of a pseudo $s$-separator,
but a straightforward simulation of this algorithm in the MPC model is, however, insufficient to achieve $O(1)$ rounds. 
To overcome this, we 
resort to the technique of {\em $\eps$-approximation}~\cite{DBLP:journals/jcss/LiLS01,DBLP:conf/pods/AgarwalFMN16}.

\subsubsection{Preliminaries}
\subparagraph{Range Space and $\eps$-Approximation.}
A {\em range space} $\Sigma$ is a pair $(X, \mathcal{R})$, where $X$ is a ground set and $\mathcal{R}$ is a family of subsets of $X$. The elements of $X$ are {\em points} and the elements of $\mathcal{R}$ are {\em ranges}.
For $Y\subseteq X$, the {\em projection} of $\mathcal{R}$ on $Y$ is $\mathcal{R}_{|Y}=\{q \cap Y \mid  q \in \mathcal{R}\}$.
Given a range space $\Sigma=(X,\mathcal{R})$ and $\eps\in[0,1]$, a subset $X'\subseteq X$ is called an {\em $\eps$-approximation} of $\Sigma$ if for every range $q \in \mathcal{R}$, we have $\left|\frac{|X'\cap q|}{|X'|}-\frac{|q|}{|X|}\right|\leq \eps\,.$

\begin{fact}[\cite{DBLP:journals/jcss/LiLS01,DBLP:conf/pods/AgarwalFMN16}]
\label{fact}
	Consider the range space $(V,\mathcal{R})$, where the ground set $V$ is a set of $n$ points in $\mathbb{N}^d$ 
and $\mathcal{R}=\{V\cap \Box \mid \forall \text{ axis-parallel  rectangular range } \Box \text{ in } \N^d\}$. 
A random sample set $\tilde{V}$ of $V$ with size $|\tilde{V}| \in \Theta({r^2} \cdot \log n)$ is a $\frac{1}{r}$-approximate of $(V, \mathcal{R})$ with high probability.
\end{fact}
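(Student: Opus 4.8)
The plan is to establish this as an instance of the classical $\eps$-approximation theorem, proved from scratch via a union bound over a polynomial-size family of \emph{combinatorially distinct} ranges together with a Chernoff/Hoeffding concentration bound for each fixed range. Throughout, set $\eps = 1/r$ and let $t = |\tilde V| = \Theta(r^2\log n)$ be the sample size, thinking of $\tilde V$ as a uniformly random size-$t$ subset of $V$ drawn without replacement (the argument is essentially unchanged if each point of $V$ is instead included independently with probability $t/n$, which is closer to the sampling primitive used elsewhere in the paper).

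First I would bound $|\mathcal R_{|V}|$, the number of distinct subsets $V\cap\Box$ as $\Box$ ranges over all axis-parallel boxes in $\N^d$. On each coordinate axis $i$, the two bounding facets of $\Box$ interact with $V$ only through which of the (at most $n$) distinct values $\{v[i]:v\in V\}$ they separate; hence there are at most $(n+1)^2$ combinatorially different pairs of constraints per axis, and thus at most $(n+1)^{2d}+1 = n^{O(1)}$ distinct ranges overall. Equivalently, one can invoke the fact that axis-parallel boxes in $\R^d$ have VC dimension exactly $2d = O(1)$, so Sauer--Shelah already gives $|\mathcal R_{|V}| = O(n^{2d})$.

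Next, fix one range $q\in\mathcal R_{|V}$ with $|q|=m$. Writing $\tilde V = \{u_1,\dots,u_t\}$ and $X_j = \mathbf{1}[u_j\in q]$, we have $\mathbb E\big[\tfrac1t\sum_j X_j\big] = m/n$, and since sampling without replacement is at least as concentrated as i.i.d.\ sampling, Hoeffding's inequality yields
\[
\Pr\!\left[\left|\frac{|\tilde V\cap q|}{|\tilde V|}-\frac{|q|}{|V|}\right| > \eps\right]\;\le\;2\exp\!\big(-2\eps^2 t\big)\;=\;2\exp\!\big(-2t/r^2\big).
\]
Taking a union bound over the at most $(n+1)^{2d}+1$ ranges, the probability that $\tilde V$ fails to be a $\frac1r$-approximation is at most $2\big((n+1)^{2d}+1\big)\exp(-2t/r^2)$; choosing the hidden constant in $t=\Theta(r^2\log n)$ large enough (as a function of $d$ and of the desired failure exponent) drives this below $n^{-\Omega(1)}$, which is the claimed high-probability guarantee.

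The main point to get right is the interplay between the sample size and the union-bound blow-up: we need $2t/r^2 \ge (2d+\Omega(1))\ln n$, i.e.\ $t = \Omega\!\big((2d+O(1))\,r^2\ln n\big)$, which is consistent with $t\in\Theta(r^2\log n)$ \emph{precisely because $d$ is a constant}. A secondary subtlety is that the fully general bound for range spaces of VC dimension $\nu$ is $\Theta\!\big(\tfrac1{\eps^2}(\nu\log\tfrac1\eps+\log\tfrac1\delta)\big)$ samples, i.e.\ nominally an extra $r^2\log r$ term; this is absorbed into $\Theta(r^2\log n)$ here because in every application $r = n^{O(1)}$, so $\log r = \Theta(\log n)$. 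Finally, making the ``without replacement'' step rigorous (Hoeffding's own without-replacement bound, or Hoeffding--Serfling), and observing that the argument is robust to the exact sample size since the paper's sampling primitive only guarantees $|\tilde V|$ in a constant-factor range $[k,3k]$ with $k = \Theta(r^2\log n)$, are routine and do not affect the asymptotics.
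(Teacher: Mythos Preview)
Your argument is correct. Note, however, that the paper does not actually prove this statement at all: it is labeled a \emph{Fact} and is simply quoted from the cited references~\cite{DBLP:journals/jcss/LiLS01,DBLP:conf/pods/AgarwalFMN16}, with no accompanying proof. So there is nothing to compare against on the paper's side.

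What you have supplied is a clean, self-contained derivation that exploits the special structure of axis-parallel boxes: because there are only $n^{O(d)}$ combinatorially distinct ranges, a plain union bound over them combined with Hoeffding for each fixed range already forces $t=\Theta(r^2\log n)$ samples to suffice. This is more elementary than invoking the general VC-dimension machinery of~\cite{DBLP:journals/jcss/LiLS01}, which gives the $\Theta(\eps^{-2}(\nu\log\eps^{-1}+\log\delta^{-1}))$ bound for arbitrary range spaces of VC dimension $\nu$; your shortcut works precisely because the shatter function is polynomial in $n$ with a fixed exponent, so you never need Sauer--Shelah or chaining. The trade-off is that the cited result is black-box and more general, while your argument is specific to this setting but requires no external lemma beyond Hoeffding. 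Your handling of the secondary issues (without-replacement sampling, the absorbed $\log r$ term, and robustness to the sampler returning $|\tilde V|\in[k,3k]$) is also fine.
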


\noindent
Throughout this paper, we consider axis-parallel rectangular ranges only.
For simplicity, we may just say $\tilde{V}$ is a $\frac{1}{r}$-approximation of $V$. 

\subsubsection{Our Pseudo $s$-Separator Algorithm}
Our MPC algorithm comprises several {\em super rounds}, each performing:
\begin{itemize}
	\item Obtain an $\eps$-approximation, $\tilde{V}$, tuning $\eps$ so this fits in the local memory of a machine; 
	\item Perform roughly $O(s^{O(1)})$ ``good enough'' binary partitions in the local memory. 
\end{itemize}
Our algorithm then {\em simultaneously} recurs on  each of the resulting induced sub-graphs, if applicable, in the next super round.
As we will show, each super round would decrease the size of the graph by a factor of $\Theta(s^{O(1)})$, 
and there can be at most $O(\log_{s^{O(1)}} \frac{|V|}{s}) = O(1)$ super rounds.
Moreover, 
by Fact~\ref{fact}, an $\epsilon$-approximation can be obtained by sampling 
which can be achieved in $O(1)$ MPC rounds; detailed implementations can be found in Appendix~\ref{app:op}. 
Therefore, our algorithm runs in $O(1)$ MPC rounds.

Define $r = 2 s^{1/d}$.
In order to ensure that a $\frac{1}{r}$-approximation $\tilde{V}$ of $V$ fits the local memory size, $O(s)$, 
without loss of generality, in the following, we assume that the dimensionality, $d$, is at least~$3$.
Otherwise, for the case $d=2$, we can ``lift'' the dimensionality to~$3$ by adding a {\em dummy} dimension to all the points in $V$; however, the bounds derived for $d = 3$ apply.

The challenge lies in how to find a {\em good enough} (compared to that in Lemma~\ref{lmm:bin-partition}) $c$-divider $\pi$ 
to separate $V$ in the local memory with access to a $\frac{1}{r}$-approximation $\tilde{V}$ {\em only}.
\begin{lemmaapprep}\label{lmm:approx-bin-partition}
	Given $V$ with $|V| > s$ and $d \geq 3$, with only access to a $\frac{1}{r}$-approximation $\tilde{V}$ of $V$ such that $r = 2 s^{1/d}$ and $|\tilde{V}| \in O(s)$, we can find a $c$-divider $\pi$ in $\text{mbr}(V)$, which satisfies:
	\begin{itemize}
		\item $|V_{\text{left}}| \geq \frac{|V|}{8(d+1)}$ and  $|V_{\text{right}}| \geq \frac{|V|}{8(d+1)}$,
		\item $|S_{\pi}| \leq 4c(1 + d)^{1/d}|V|s^{-1/d}$.
	\end{itemize}
\end{lemmaapprep}
\begin{proof}
	By Lemma~\ref{lmm:bin-partition}, we know that there must exist a $c$-divider, $\pi^*$, achieving the two bullets in Lemma~\ref{lmm:bin-partition}. By the property of $\frac{1}{r}$-approximation, 
	there must exist a {\em target $c$-divider} $\tilde{\pi}$ which partitions $\tilde{V}$ into $\{\tilde{S}_{\tilde{\pi}}, \{\tilde{V}_{\text{left}}, \tilde{V}_{\text{right}}\}\}$ such that:
	\begin{itemize}
		\item each of $|\tilde{V}_{\text{left}}|$ and $|\tilde{V}_{\text{right}}|$ is at least $|\tilde{V}| (\frac{1}{4d + 4} - \frac{1}{r})$, and 	
		\item $|\tilde{S}_{\tilde{\pi}}| \leq 2c (1 + d)^{1/d}|\tilde{V}|(\frac{1}{|V|^{1/d}} + \frac{1}{r})$.
	\end{itemize}
	This is because at least $\pi^*$ is a target $c$-divider satisfying the above properties for $\tilde{V}$.

	Since $|\tilde{V}| \in O(s)$, it is stored in the local memory 
	of one machine.
 By examining all possible $c$-dividers in $\text{mbr}(V)$ for $\tilde{V}$, 
	at least one target $c$-divider $\pi$ 
	must be found ($\pi$ is not necessarily $\pi^*$), with which, by the property of $\frac{1}{r}$-approximation, we have the following properties of $\pi$ for $V$:
	\begin{itemize}
		\item $|V_{\text{left}}| \geq |V| (\frac{1}{4d + 4} - \frac{2}{r}) \geq \frac{|V|}{8(d+1)}$ and  $|V_{\text{right}}| \geq \frac{|V|}{8(d+1)}$,
		\item $|S_{\pi}| \leq 2c (1 + d)^{1/d}|{V}|(\frac{1}{|V|^{1/d}} + \frac{2}{r}) \leq 4c(1 + d)^{1/d}|V|s^{-1/d}$. 
	\end{itemize}
The lemma thus follows.
\end{proof}

Lemma~\ref{lmm:approx-bin-partition} suggests that, with only access to a $\frac{1}{r}$-approximation of $V$, we can compute a {\em good enough} $c$-divider $\pi$; it separates $V$ into 
two parts each with size $\geq \frac{|V|}{8(d + 1)}$; the size of the separator $S_{\pi}$ is still bounded by $O(\frac{c \cdot |V|}{s^{1/d}})$, the same as in Lemma~\ref{lmm:bin-partition}.

By a more careful analysis, interestingly, with a sample set $\tilde{V}$ of size $\Theta(s)$,
one can derive that $\tilde{V}$ indeed is sufficient to be, with high probability, a $\frac{1}{3r^{1 + l}}$-approximation, with 
some constant $l \in (0, \frac{1}{2})$.
The $\frac{1}{3r^{1 + l}}$-approximation property of $\tilde{V}$ is sufficient
to invoke Lemma~\ref{lmm:approx-bin-partition} for multiple times,
i.e., making multiple partitions, 
with $\tilde{V}$ in local memory.
We formalise this with the following crucial lemma.
\begin{lemmaapprep}\label{lmm:observation}
	Consider a $\frac{1}{3r^{1 + l}}$-approximation $\tilde{V}$ of the range space 
	$\Sigma = (V, \mathcal{R})$, where $|V| > r^l s$ and $l = \min\{\frac{1}{3}, \log_r \frac{|V|}{s}\}$ and $\mathcal{R}$ is the set of all possible axis-parallel rectangular ranges in $\text{mbr}(V)$. 
	Let $\B$ be an arbitrary range in $\mathcal{R}$.
	Define $V_1 = \Box \cap V$ and $\tilde{V}_1 = \Box\cap \tilde{V}$. 

	If $|\tilde{V}_1|\geq\frac{2 |\tilde{V}|}{r^l}$, 
	then we have: (i) $\tilde{V}_1$ is a $\frac{1}{r}$-approximation of $\Sigma'=(V_1,\mathcal{R}_1)$, where $\mathcal{R}_1=\{q \in \mathcal{R}\mid q\subseteq V_1\}$, and (ii) $|V_1| > s$.
Therefore, Lemma~\ref{lmm:approx-bin-partition} is applicable on $V_1$ and $\tilde{V}_1$. 
\end{lemmaapprep}
\begin{proof}
Consider an arbitrary rectangle $\Box'\subseteq \Box$. 
Denote $V'_1=\Box'\cap V=\Box'\cap V_1$ and $\tilde{V}'_1=\Box'\cap \tilde{V}=\Box'\cap \tilde{V}_1$. 
Since $\Box$ is an axis-parallel rectangular range and by the fact that $\tilde{V}$ is a $\frac{1}{3r^{1+l}}$-approximation of $\Sigma=(V,\mathcal{R})$, 
we have $\frac{|V_1|}{|V|}\geq\frac{|\tilde{V}_1|}{|\tilde{V}|}-\frac{1}{3r^{1+l}}$, and hence, 
\begin{equation}\label{eq:2}
	\frac{|V|}{|\tilde{V}|}\leq \frac{|V_1|}{|\tilde{V}_1|}+\frac{|V|}{|\tilde{V}_1|3r^{1+l}}\,. 
\end{equation}

On the other hand, $\Box'$ is also an axis-parallel rectangular range;
analogously, we have $\frac{|V'_1|}{|V|} \leq \frac{|\tilde{V}'_1|}{|\tilde{V}|}+\frac{1}{3r^{1+l}}$ and thus, 
\begin{equation}\label{eq:3}
	|V'_1|\leq \frac{|\tilde{V}'_1||V|}{|\tilde{V}|}+\frac{|V|}{3r^{1+l}}\leq |\tilde{V}'_1| \cdot (\frac{|V_1|}{|\tilde{V}_1|}+\frac{|V|}{|\tilde{V}_1|3r^{1+l}})+\frac{|V|}{3r^{1+l}} \,,
\end{equation}
where 
the second inequality is from Inequality~\eqref{eq:2}.

By the property of $\frac{1}{3r^{1 + l}}$-approximation and the fact that $|\tilde{V}_1|\geq\frac{2 |\tilde{V}|}{r^l}$, 
we have:
\begin{equation*}
\frac{|V_1|}{|V|} \geq \frac{|\tilde{V}_1|}{|\tilde{V}|} - \frac{1}{3r^{1 + l}} \geq (\frac{2}{r^l}-\frac{1}{3r^{1+l}}) >\frac{5}{3r^l}\,.
\end{equation*}

\noindent
And therefore, $|V_1| > |V| \cdot \frac{5}{3r^l} > s$; this proves Bullet (ii) in the lemma statement.

Finally, by the fact that $|\tilde{V}'_1| \leq |\tilde{V}_1|$ and substituting $|V_1| > |V| \cdot \frac{5}{3r^l}$ into Inequality~\eqref{eq:3}, we have $|V'_1|\leq \frac{|\tilde{V}'_1||V_1|}{|\tilde{V}_1|}+\frac{2|V|}{3r^{1+l}}\leq \frac{|\tilde{V}'_1||V_1|}{|\tilde{V}_1|}+\frac{2|V_1|}{5r} < \frac{|\tilde{V}'_1||V_1|}{|\tilde{V}_1|}+\frac{|V_1|}{r}$. 

Analogously, it can be shown that $|V'_1|\geq \frac{|\tilde{V}'_1||V_1|}{|\tilde{V}_1|}-\frac{|V_1|}{r}$.
Thus, $\tilde{V}_1$ is a $\frac{1}{r}$-approximation of $V_1$.
Lemma~\ref{lmm:observation} follows.
\end{proof}

\paragraph*{Our MPC Algorithm for Computing Pseudo $s$-Separator}

\subparagraph{One Super Round Implementation.}
Based on Lemma~\ref{lmm:observation}, given an implicit $(d,c)$-grid graph $G=(V, \mathcal{E})$, where $d\geq 3$, $|V| > r^l s$ for $l = \min\{\frac{1}{3}, \log_r \frac{|V|}{s}\}$, $1\leq c\leq s^{\frac{1}{d^3}}$ and $r = 2s^{1/d}$,
the implementation of one super round is shown in
Algorithm~\ref{algo:sep}.

\begin{algorithm}
	\caption{One Super Round Implementation for Pseudo $s$-Separator}\label{algo:oneround}
	\KwIn{an implicit $(d,c)$-grid graph $G=(V,\E)$ stored in a set of contiguous machines}
	\label{algo:sep}       
        Compute  a random sample set $\tilde{V}$ of $V$ in size $\Theta(s)$, send it to machine $M_0$\;

        \SetKwBlock{BeginBlock}{$\mathbf{M_0}$ processes locally:}{end}
\BeginBlock{
    Let $K \leftarrow \frac{2|\tilde{V}|}{r^l},\mathbb{V}\leftarrow\{\tilde{V}\}, \Pi\leftarrow \emptyset$\;
    \While{$\exists \tilde{V}_1\in \mathbb{V}$ s.t. $|\tilde{V}_1| \geq K$}{
    Apply Lemma~\ref{lmm:approx-bin-partition} on $\tilde{V}_1$;
    denote the resulting $c$-divider by $\tilde{\pi}$, which partitions $\tilde{V}_1$ into $\{\tilde{S}_{\tilde{\pi}}, \{\tilde{V}_{\text{left}}, \tilde{V}_{\text{right}}\}\}$ \;
    Let $\mathbb{V}\leftarrow\mathbb{V}\cup \{ \tilde{V}_{\text{left}}, \tilde{V}_{\text{right}}\}-\{\tilde{V}_1\}$, $\Pi\leftarrow\Pi \cup \{\tilde{\pi}\}$\;
}
}
        Broadcast $\Pi$ to all machines related to $V$\;
        \SetKwBlock{BeginxBlock}{Each machine processes locally:}{end}
        \BeginxBlock{
        Identify the separator vertices stored in their local memory with $\Pi$\;
        Group vertices according to the induced sub-graph, according to $\Pi$\;
        }
        After sorting, each sub-graph induced by $\Pi$ is stored in a set of contiguous machines\;
\end{algorithm}

\subparagraph{The Overall Pseudo $s$-Separator Algorithm.}
Given an implicit $(d,c)$-grid graph $G=(V, \mathcal{E})$, where $d\geq 3$, $|V| > r^l  s$ for $l = \min\{\frac{1}{3}, \log_r \frac{|V|}{s}\}$, $1\leq c\leq s^{\frac{1}{d^3}}$ and $r = 2s^{1/d}$,
\begin{itemize}
	\item perform a new super round \textit{simultaneously} on each of the induced sub-graphs $G'$ with $|V'| > r^{l'}  s$ for $ l' = \min\{\frac{1}{3}, \log_r \frac{|V'|}{s}\}$;
	\item terminate when
no more super rounds can be performed, in which case, 
all the induced sub-graphs have no more than $s$ vertices.
\end{itemize}

\begin{lemmaapprep}
	After a  super round, the input implicit $(d,c)$-grid graph $G = (V, \mathcal{E})$ is separated into $h$ 
	disconnected induced sub-graphs $G_i=(V_i, \mathcal{E})$ for $i \in [1, h]$
	such that $h \in O(r^l)$ and $|V_i| \in O(|V|/r^l)$ for all $i \in [1, h]$.
\end{lemmaapprep}
\begin{proof}
	According to the proof of Lemma~\ref{lmm:approx-bin-partition}, the sample set $\tilde{V}_i$ of each $G_i$ is 	
	of size at least $\frac{2|\tilde{V}|}{8(d+1)r^l}$ for all $i \in [1, h]$. 
	Therefore, there can be at most $O(r^l)$ applications of Lemma~\ref{lmm:approx-bin-partition} and hence, at most $h \in O(r^l)$ sub-graphs generated.
	By the fact that $\frac{|\tilde{V}_i|}{|\tilde{V}|} < \frac{2}{r^l}$ and the property of $\frac{1}{3r^{1 + l}}$-approximation $\tilde{V}$, we have: 
	$\frac{|V_i|}{|V|} < \frac{2}{r^l} + \frac{1}{3r^{1+l}} < \frac{3}{r^l}$, implying $|V_i| \in O(\frac{|V|}{r^l})$.
\end{proof}

\begin{lemmaapprep}\label{lmm:approx-analysis}
	When our Overall Pseudo $s$-Separator algorithm terminates, let $S$ be the set of all the vertices in $V$ falling 
	in target $c$-dividers in applications of Lemma~\ref{lmm:approx-bin-partition}.
	Set $S$ is a pseudo $s$-separator of the input implicit $(d,c)$-grid graph $G$.
\end{lemmaapprep}
\begin{proof}
The proof of Lemma~\ref{lmm:approx-analysis} is analogous to that of Lemma~\ref{lmm:s-sep-ext}, and thus omitted.
\end{proof}

\begin{lemmaapprep}\label{lmm:sep-round}
	The total number of super rounds of our overall pseudo $s$-separator algorithm is bounded by {$O(\frac{d}{\alpha})$}, and the total number of MPC rounds is bounded by {$O(1)$}.
\end{lemmaapprep}
\begin{proof}
	Since at the end of each super round, the vertex set size of each recursion instance is decreased by at least a factor of $\Omega(r^l)$. Hence, there can be at most {$O(\log_{r^l} \frac{|V|}{s/8(d+1)}) \subseteq  O(\frac{d}{\alpha})$ super rounds, because $l \geq \frac{1}{3}$, $r \geq 2s^{1/d}$ and $s = |V|^\alpha$.}
	Observe that each super round can be performed in {$O(\frac{1}{\alpha})$} MPC rounds.
Since both $d$ and $\alpha$ are constants, Lemma~\ref{lmm:sep-round} follows.
\end{proof}

\noindent
Putting the above lemmas together, we have:
\begin{theorem}\label{thm:sep-algo}
	Given an implicit $(d,c)$-grid graph $G=(V, \mathcal{E})$ with $d\geq 3$ and $1 \leq c \leq s^{\frac{1}{d^3}}$, 
	there exists a Las Vegas MPC algorithm with local memory {$\Theta(s) \subseteq \Theta(|V|^\alpha)$} per machine, for arbitrary constant $\alpha \in (0, 1)$, which computes a pseudo $s$-separator of $G$ in 
$O(1)$
rounds with high probability.
\end{theorem}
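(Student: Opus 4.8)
The plan is to let the algorithm promised by the theorem be the \emph{Overall Pseudo $s$-Separator Algorithm} stated above, driven by the One Super Round Implementation of Algorithm~\ref{algo:oneround}, and then to assemble three ingredients: (i) each super round costs $O(1)$ MPC rounds; (ii) only $O(1)$ super rounds are needed, with high probability; and (iii) the set $S$ collecting all vertices that fall into the target $c$-dividers chosen during the run is a valid pseudo $s$-separator. Ingredients (ii) and (iii) are essentially delivered already by Lemma~\ref{lmm:sep-round} and Lemma~\ref{lmm:approx-analysis}, so the remaining work is to pin down the cost of a single super round and to combine the pieces, including the probability bookkeeping that makes the algorithm Las Vegas.

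First I would bound one super round. By Fact~\ref{fact} together with the atomic sampling routine of Appendix~\ref{app:op}, a uniform random sample $\tilde V$ of size $\Theta(s)$ is collected onto one machine in $O(\log_s|V|) = O(1/\alpha)$ rounds, and such a sample is, with high probability, a $\tfrac{1}{3r^{1+l}}$-approximation of $(V,\mathcal{R})$: it is enough that $s = \Omega\!\big(r^{2(1+l)}\log|V|\big)$, and since $d\ge 3$, $l\le \tfrac{1}{3}$, and $r = 2s^{1/d}$, we get $r^{2(1+l)} \le r^{8/3} = O(s^{8/9})$, hence $r^{2(1+l)}\log|V| = O(s^{8/9}\log s) = O(s)$ for every constant $\alpha\in(0,1)$. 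This is precisely why we assumed $d\ge3$ (the case $d=2$ being lifted to $d=3$ by a dummy coordinate). The remaining operations of a super round are: local computation on $M_0$ (repeated applications of Lemma~\ref{lmm:approx-bin-partition} to sub-multisets of $\tilde V$, each justified by Lemma~\ref{lmm:observation} as long as the sub-multiset retains at least $K = 2|\tilde V|/r^l$ sample points); broadcasting $\Pi$, which holds only $O(r^l)=O(s)$ $c$-dividers of $O(1)$ words each and hence fits in one machine (broadcast in $O(1)$ rounds); and a final sorting pass (Fact~\ref{theo:sorting}, $O(1/\alpha)$ rounds) to re-pack each induced sub-graph onto a contiguous block of machines. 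Thus one super round is $O(1/\alpha) = O(1)$ MPC rounds.

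Next I would chain the super rounds and account for the randomness. Conditioned on every sample drawn being a genuine $\tfrac{1}{3r^{1+l}}$-approximation, the super-round analysis above shows that each processed sub-graph $G'=(V',\mathcal{E})$ with $|V'| > r^{l'}s$ is broken into $O(r^{l'})$ pieces of size $O(|V'|/r^{l'})$, with $r^{l'} = \Omega(s^{1/(3d)})$ while $l'=\tfrac{1}{3}$; hence after $O(\log_{s^{1/(3d)}}(|V|/s)) = O(d/\alpha)$ super rounds every remaining sub-graph has $O(s)$ vertices, which is Lemma~\ref{lmm:sep-round}. To make the algorithm strictly Las Vegas I would verify termination explicitly: after each super round, count the size of every induced sub-graph (an $O(1)$-round aggregation) and recurse only on those still exceeding $s$, so by Lemma~\ref{lmm:approx-analysis} the output $S$ is \emph{always} a pseudo $s$-separator, with only the round count being probabilistic. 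For the high-probability guarantee, note that the total number of sampling invocations over the whole run is $O(|V|/s)$ (each processed piece has size $> s$ and the pieces at any fixed super round are disjoint, and there are $O(d/\alpha)$ super rounds), each sample is drawn from a set of more than $s = |V|^\alpha$ points, so each fails to be a valid approximation with probability at most $|V|^{-\Omega(1)}$; inflating the samples by a constant factor so that exponent exceeds $1$ and taking a union bound shows that with high probability all samples succeed, whence the run halts within $O(d/\alpha)=O(1)$ super rounds.

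The main obstacle is the tension already flagged in the Remark: since a super round relies on $\varepsilon$-approximations rather than exact counts, the per-level separator sizes no longer shrink geometrically, and one must be careful that (a) the precision actually required, $\tfrac{1}{3r^{1+l}}$ rather than merely $\tfrac{1}{r}$, is still affordable within $O(s)$ local memory — handled above by $d\ge3$ and $l<\tfrac{1}{2}$; and (b) a single super round, despite performing $\Theta(s^{\Theta(1)})$ local binary partitions against the \emph{same} sample $\tilde V$, certifies real progress on every resulting piece — which is exactly the content of Lemma~\ref{lmm:observation}, since any piece retaining at least $K$ sample points still admits the restriction of $\tilde V$ to it as a $\tfrac{1}{r}$-approximation, so Lemma~\ref{lmm:approx-bin-partition} applies to it. One boundary regime deserves a sentence: once a piece has size in $(s, r^{1/3}s]$ one further super round — whose threshold becomes $K = 2|\tilde V|\,s/|V'|$ — brings it to $O(s)$ vertices, and a piece of size $O(s)$ is sparse, so it is collected onto $O(1)$ machines and a separator for it is obtained by running the exact Multi-Partitioning Algorithm locally. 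With (a), (b), and this cleanup in place, the theorem follows by combining the $O(1)$-round cost of a super round, the $O(d/\alpha) = O(1)$ bound on the number of super rounds, and the correctness of $S$ from Lemma~\ref{lmm:approx-analysis}.
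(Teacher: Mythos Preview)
Your proposal is correct and follows the same route as the paper: the paper's proof of Theorem~\ref{thm:sep-algo} is literally the one-line ``Putting the above lemmas together,'' meaning it instantiates the Overall Pseudo $s$-Separator Algorithm and invokes Lemma~\ref{lmm:approx-analysis} for correctness and Lemma~\ref{lmm:sep-round} for the $O(1)$-round bound. Your write-up simply unpacks these lemmas with more care (the $r^{2(1+l)}\log|V|=O(s)$ sample-size check, the union bound over $O(|V|/s)$ sampling invocations, the Las Vegas verification step, and the $(s,r^{1/3}s]$ boundary regime), all of which are consistent with, and in places more explicit than, the paper's own treatment.
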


\subsection{Grid Graph Connectivity Algorithms}
\label{subsec:gridconnect}

Next, we introduce our MPC algorithms for computing CCs and MSFs for implicit $(d,c)$-grid graphs with $1 \leq c \leq s^{\frac{1}{d^3}}$.

\subparagraph{Bounding the Feasible Range for $\alpha$.}
Recall that Theorem~\ref{thm:sep-algo} shows that a pseudo $s$-separator $S$ of $G$ can be computed in $O(1)$ rounds with high probability.
This algorithm works for all constant  
$\alpha \in (0,1)$.
However, as we shall see shortly, both our CC and MSF algorithms require the pseudo $s$-separator, $S$, 
to fit in the local memory $O(s)$ in a single machine. 
As a result, 
they require $|S| \leq s$.
From the proofs of Lemma~\ref{lmm:approx-bin-partition} and Lemma~\ref{lmm:approx-analysis},
we know that $|S| \leq \frac{4c|V|}{s^{1/d}} \log s$.
Recalling that $s=|V|^{\alpha}$, our algorithm works when $\alpha\geq \frac{d + 1}{d+2}$ for $d\geq 3$. 
Combining the case of $d = 2$, a feasible range of $\alpha$ becomes $(\max\{ \frac{d+1}{d+2}, \frac{4}{5}\} , 1)$.

\begin{algorithm}[t]
	\caption{Connected Component Algorithm}\label{algo:cc}
	\KwIn{implicit $(d,c)$-grid graph $G=(V,\E)$ with $c\in[1,s^{\frac{1}{d^3}}]$}
        \KwOut{for each vertex $v\in V$, associate with the id of the connected component where it belongs}       
        calculate the pseudo $s$-separator with the algorithm in Section~\ref{subsec:pseudoseparator}\;
        \SetKwBlock{BeginBlock}{process each extended sub graph $G^+_i$ in their machine:}{end}

\BeginBlock{
    calculate the MSF of $G^+_i$, denote the edge set as $T_i$\;
    compress $T_i$ into an edge set $E'_i$, s.t. it induces a tree. All leaf nodes of the tree are in $S_i$, and the internal nodes of the tree are either from $S$ or have degree $> 2$\;
	send $E'_i$ to machine $M_0$\;
}
\SetKwBlock{BeginBlockb}{$\mathbf{M_0}$ run:}{end}
\BeginBlockb{
build a graph based on all of the received edges $\cup E'_i$\;
solve the CC problem on $E'_i$, get the CC id of vertices in $S$\;
send $S_i$ back to the corresponding machine along with their CC id's\;
}
\end{algorithm}

\subparagraph{Connectivity and Minimum Spanning Forest.} 
Consider a pseudo $s$-separator $S$ of $G$;
let $G_i = (V_i, \mathcal{E})$ be the induced sub-graphs of $G$ by $S$, where $i \in [1, h]$.
By Definition~\ref{def:s-sep}, we have $h = O(\frac{|V|}{s})$ and $|V_i| \in O(s)$.
For each $G_i$, 
denote the projection of $\text{mbr}(V_i)$ on dimension $j$ by $[x_i^{(j)}, y_i^{(j)}]$, for $i \in [1, h]$ and $j \in [1, d]$. 
Consider the hyper-box $H_i$ whose projection 
on each dimension $j$ is $[x_i^{(j)} - c, y_i^{(j)} + c]$ for $i \in [1, h]$ and $j \in [1, d]$;
we say that $H_i$ is the {\em extended region} of $G_i$. 
Moreover, we define the {\em extended graph} of $G_i$, denoted by $G_i^+ = (V_i^+, \mathcal{E})$, 
as the induced graph by the vertices in $V_i^+ = V_i \cup S_i$, where 
$S_i = S \cap H_i$ 
is the set of separator vertices falling in the extended region of $G_i$.
Figure~\ref{fig:grid-sep} shows an example of $G_i$ and $G_i^+$.
Clearly, by choosing a constant 
$\alpha \in (\max\{\frac{d+1}{d+2}, \frac{4}{5}\} , 1)$,
we know that $|S| \in O(s)$ and hence, $|V_i^+| \in O(s)$.
Therefore, $G_i^+$ fits in the local memory in one machine.
Moreover, without loss of generality, we assume that $G_i^+$ is stored completely in a machine $M_i$.

    \begin{algorithm}[t]
	\caption{MSF Algorithm}\label{algo:msf}
	\KwIn{implicit $(d,c)$-grid graph $G=(V,\E)$ with $c\in[1,s^{\frac{1}{d^3}}]$}
        \KwOut{calculate a MSF of $G$}
        calculate the pseudo $s$-separator with the algorithm in Section~\ref{subsec:pseudoseparator}\;
        \SetKwBlock{BeginBlock}{process each extended sub graph $G^+_i$ in their machine:}{end}

\BeginBlock{
    calculate the MSF of $G^+_i$, denote the edge set as $T_i$\;
    compress $T_i$ into an edge set $E'_i$ as described in Algorithm~\ref{algo:cc}\;
    //each edge $e$ in $E'_i$ corresponds to a path in $T_i$ and the heaviest edge weight in that path is called the {\em bottleneck edge of $e$} \;
    
    set the weight of edges in $E'_i$ the same as its bottleneck edge weight\;
	send $E'_i$ to machine $M_0$\;

}
\SetKwBlock{BeginBlockb}{$\mathbf{M_0}$ run:}{end}
\BeginBlockb{
build a graph based on all of the received edges $E'=\cup E'_i$\;
solve the MSF problem on $E'$, denoted as $T'$\;
assign to each edge in $E'$ a label to indicate whether it is in $T'$\;
send $E'_i$ back to the corresponding machine along with the label\;
}
\BeginBlock{
for each edge in $E'_i$ labeled as negative, i.e., not included in $T'$, (conceptually) remove its bottleneck edge from $G^+_i$ \;
//let $G_i^{+'}$ denote the extended sub graph after conceptually removing those edges\;
calculate the MSF of $G^{+'}_i$, denote the result as $\T_i$\;
}
//the union of all $\T_i$ is the MSF of $G$\;
\end{algorithm}

By incorporating the techniques of the existing External Memory CC and MSF algorithms for grid graphs~\cite{DBLP:journals/jgaa/GanT18}, 
where the detailed MPC implementations of them are respectively shown in 
Algorithm~\ref{algo:cc}  and Algorithm~\ref{algo:msf},
we have the following theorem:
\begin{theoremapprep}\label{thm:implicit-connectivity}
	Given an implicit (edge-weighted) $(d,c)$-grid graph $G=(V, \mathcal{E})$ with $1 \leq c \leq s^{\frac{1}{d^3}}$, 
there exists a Las Vegas MPC algorithm with local memory {$\Theta(s) \subseteq \Theta(|V|^\alpha)$} per machine, for arbitrary constant $\alpha \in (\max\{ \frac{d+1}{d+2}, \frac{4}{5}\} , 1)$, which computes all the connected components (resp., minimum spanning forest) of $G$ in 
$O(1)$ rounds with high probability.
\end{theoremapprep}
\begin{proof}
	The MPC algorithm is just an adaptation of the existing known CC and MSF algorithm for grid graphs~\cite{DBLP:journals/jgaa/GanT18}.
	Algorithm~\ref{algo:cc}  and Algorithm~\ref{algo:msf} give  the pseudo-code of our CC algorithm and the MSF algorithm, respectively.

	It can verified that Algorithm~\ref{algo:cc} and Algorithm~\ref{algo:msf} can be  performed in {$O(\frac{1}{\alpha})$} rounds.
\end{proof}

\section{$O(1)$-Round Approximate EMST}
\label{sec:emst}

\subparagraph*{An Overview of Our $\rho$-Approximate EMST Algorithm.}
The approach of our $\rho$-approximate EMST algorithm is 
to mimic the idea of the Kruskal's algorithm.
Specifically, our algorithm firstly 
uses those ``short'' edges to connect different connected components that are found so far, 
and then gradually considers using ``longer'' edges. 
Our algorithm
runs in super rounds, each of which takes $O(1)$ MPC rounds. 
In the $i$-th super round, 
our algorithm constructs a representative implicit $(d,c)$-grid graph $G_i = (V_i, \mathcal{E}_i)$ which only considers those edges with weight 
$\leq l_i = c^i(\frac{\rho}{\sqrt{d}})^{i-1}$, where $c = s^{\frac{1}{d^3}}$.
Next, it invokes our $O(1)$-round MSF algorithm on $G_i$,
and then starts a new super round repeating this process until a spanning tree $T$ is obtained.

\subparagraph{The Algorithm Steps.}
Consider a set $P$ of $n$ points in $d$-dimensional space $\mathbb{N}^d$ with coordinate value range $[0, \Delta]$.
Let $V_1 = P$ and set $c = s^{\frac{1}{d^3}}$.
Our MPC $\rho$-approximate EMST algorithm constructs a series of implicit $(d,c)$-grid graphs, 
denoted by $G_i = (V_i, \mathcal{E}_i)$, for $i \in \{1, 2, \ldots\}$.

\begin{itemize}
    \item \textbf{Augmented Node Information:} Each vertex $u$ in $V_i$ is associated with two 
 pieces of $O(1)$-size information: (i) a connected component id of $u$, denoted by $u.id$, and (ii) the original point $p\in P$ which $u$ currently represents, denoted by $u.pt$.
 \item \textbf{Edge Formation Rule:} for any two distinct nodes $u, v \in V_i$, in building, $\mathcal{E}_i$:
the edge $(u,v)$ exists {\em if and only if} the Euclidean distance $dist(u,v) \leq l_i$, where $l_i = c^i(\frac{\rho}{\sqrt{d}})^{i-1}$;
if it exists, its weight, 
$w(u,v)$, is
	$0$ if $u.id=v.id$, otherwise
	$\mathit{dist}( u,v)$.
\end{itemize}

Initially, $V_1 \leftarrow P$ and for each node $u \in V_1$, $u.id$ is the id of the point and $u.pt$ is the point itself.
Our $\rho$-approximate EMST algorithm runs in super rounds; in the $i$-th super round (for $i = 1, 2, \ldots$), it works as follows:
\begin{itemize}
	\item {\bf Solve Stage: MSF Computation on $G_i= (V_i, \mathcal{E}_i)$}:
		\begin{itemize}
			\item invoke our $O(1)$-round MPC algorithm (in Theorem~\ref{thm:implicit-connectivity}) to compute an MSF of $G_i$. In particular,
   let $u.cid$ be a temporary variable to denote the {\em current} connected component id of $u\in V_i$ in $G_i$. 
Observe the difference between $u.id$ and $u.cid$:  the former records the connected component id of $u$ in the {\em previous} super round, and is used to determine edge weights in $\mathcal{E}_i$ in the current round; 
			\item let $\mathcal{T}_i'$ be the set of edges in the MSF of $G_i$; remove edges with {\em zero} weight from $\mathcal{T}_i'$;
			\item initialize $\mathcal{T}_i \leftarrow \varnothing$;	for each edge $(x, y) \in \mathcal{T}_i'$, add an edge $(x.pt, y.pt)$ to $\mathcal{T}_i$; 
\item keep tract of the total edge number that are added to $T=\cup_i\T_i$; when, in total, $n-1$ edges are added, stop and return $T$ as an $\rho$-approximate EMST of $P$;
		\end{itemize}
	\item {\bf Sketch Stage: Computing  $V_{i+1}$ from $V_i$}:
		\begin{itemize}
			\item initialize $V_{i+1} \leftarrow \varnothing$; 
 impose a grid on $V_i$ with side length of $\frac{l_i\rho}{\sqrt{d}}$; 
   \item for each non-empty grid cell, $g$, add the most {\em bottom-left} corner point $t$ of $g$ to $V_{i+1}$; moreover, set $t.pt \leftarrow u.pt$ and $t.id \leftarrow u.cid$, where $u$ is  an arbitrary node in $V_i \cap g$;
		\end{itemize}
\end{itemize}

\subparagraph{MPC Implementation Details.} 
\begin{itemize}
    \item At the beginning of the Solve Stage, 
the vertex set $V_i$ is distributed across the machines. 
Thus, the MSF algorithm suggested in Theorem~\ref{thm:implicit-connectivity} is invoked on $G_i = (V_i, \mathcal{E}_i)$, 
and the resulted set of edges, $\mathcal{T}_i'$, is distributedly stored across the machines. 
By local computations, the edges in $\mathcal{T}_i'$ can be converted to edges in $\mathcal{T}_i$.
    \item To implement the Sketch Stage (i.e., computing $V_{i+1}$ from $V_i$), each machine first computes the grid cell coordinate for each vertex in $V_i$ locally. 
Our algorithm sorts all the vertices in $V_i$ by their grid cell coordinates.
After sorting, all the vertices falling in the same grid cell $g$ are either stored in just one machine or a contiguous sequence of machines.
Next, each machine generates the most bottom-left corner point $t$, according to our algorithm, for each grid cell $g$ in its local memory.
Duplicate points $t$ are then removed by Duplicate Removal, an atomic MPC operation whose implementation details can be found in Appendix~\ref{app:op}.
The vertex set $V_{i+1}$ for the next super round is thus constructed.   
\end{itemize}

\begin{figure}[t]
\label{fig:example}
  \begin{center}
  \includegraphics[height = 65mm]{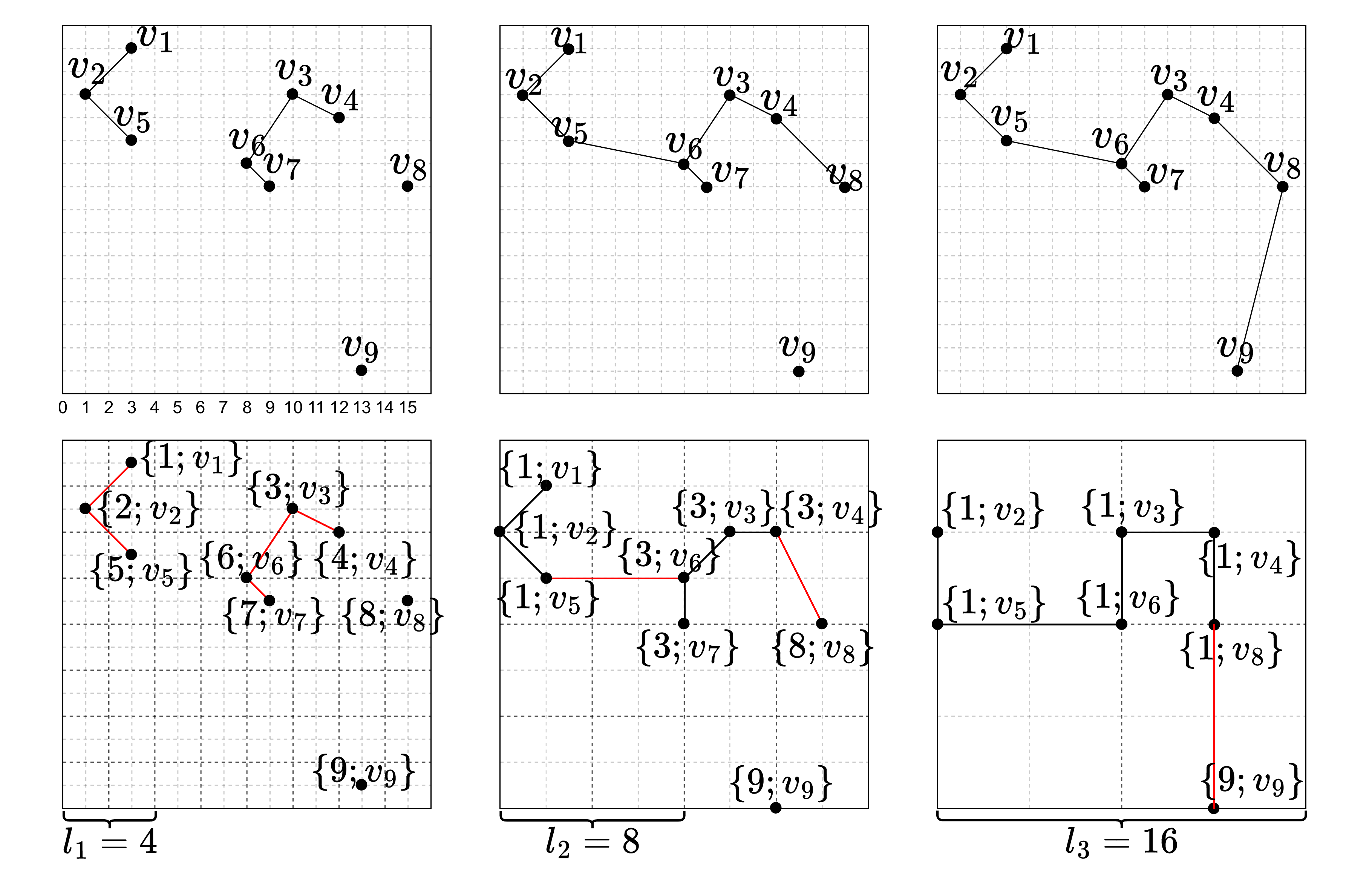}
  \end{center}

  \caption{ A running example of our approximate EMST algorithm
 }
  \label{fig:emst}

\end{figure}
\subparagraph{A Running Example.}
Figure~\ref{fig:emst} gives a running example of our approximate EMST algorithm. 
The input set $P=\{v_1,\ldots,v_9\}$ contains nine points in $2$-dimensional space with integer coordinate values in $[0,15]$. 
Our algorithm solves this problem in three super rounds. 
The top row of the figures shows the edges that are labeled to be in the result edge set $\cup_i \mathcal{T}_i$ at the end of the round $i$.
The bottom row shows the vertex set of $V_i$ and the MSF of $G_i$. 

In this example, $l_1=4$ and $\rho=\frac{\sqrt{2}}{2}$.
In the first round, our algorithm constructs $G_1=(V_1,\mathcal{E}_1)$. 
The vertex set $V_1$ with the augmented node information is shown in the first figure in at the bottom. 
The MSF of $G_1$, denoted as $\T'_1$, is computed and its edges are shown in the figure. 
After the computation of the MSF, 
each vertex $u\in V_1$ computes its component id $u.cid$.
All the zero-weight edges are deleted from $\T'_1$ and the remaining edges are highlighted in colour red. 
For each edge $(u_1,u_2)\in \T'_1$, its corresponding edge $(u_1.pt,u_2.pt)$ is marked as an output edge, i.e., the edge in the approximate EMST $T$ returned by our algorithm,  
as shown in the first top-row figure.
Next, the sketch stage starts; $G_2=(V_2,\mathcal{E}_2)$ is thus constructed as follows. 
A grid with side length of $\frac{l_1\rho}{\sqrt{d}} = 2$ is imposed on $V_1$, and then $V_2$ is generated, shown in the second bottom-row figure. 
At the end of this second round, two new edges are added to $T$, the final output edge set of the approximate EMST.
The same process is repeated with $l_2=8$ and $l_3=16$ in the third round, and  
one more edge is added to $T$. 
Since $T$ now contains in total eight edges, it is returned as an approximate EMST on the input nine points.

\subparagraph{Remark on $G_i = (V_i, \mathcal{E}_i)$.} 
From the construction of $G_i$ in our algorithm, 
at first glance, $G_i$ is an implicit $(d, \frac{c \cdot l_i \rho}{\sqrt{d}})$-grid graph which looks ineligible for our MSF algorithm proposed in Theorem~\ref{thm:implicit-connectivity}.
However, observe that since the coordinate values of the nodes in $V_{i}$ are all multiples of  
$\frac{l_i\rho}{\sqrt{d}}$, therefore, by a simple scaling, $G_i$ is indeed an implicit $(d, c)$-grid graph.

\begin{lemmaapprep}
Our $\rho$-approximate EMST algorithm takes $O(1)$ MPC rounds with high probability.
\end{lemmaapprep}
\begin{proof}
There are at most $O(\frac{\log \Delta}{\log (s^{1/d^3}\rho/\sqrt{d})})$ super rounds.
In each super round, the invocation of our MSF algorithm for a $(d,c)$-grid graph takes $O(1)$ MPC rounds with high probability (by Corollary~\ref{cor:grid-msf}), and the construction of the next $G_i(V_i, \mathcal{E}_i)$ can be achieved by sorting and scanning the points, which is $O(1)$-MPC-round doable.
Therefore, our $\rho$-approximate EMST algorithm takes $O\left(\frac{d\log \Delta}{\alpha^2\log \left(s^{1/d^3}\rho/\sqrt{d}\right) }\right)\subseteq O\left (\frac{d^4}{\alpha^3}\right ) \subseteq O(1)$ rounds.
\end{proof}

Let $T$ be the union of $\mathcal{T}_i$ for all $i = 1, 2, \ldots$. 
We prove that $T$ is an $(2\rho)$-approximate EMST of $P$, achieving both overall and edge-wise approximation.
By decreasing the constant approximation parameter $\rho$ by half, an $\rho$-approximate EMST can be obtained.

\begin{toappendix}
For the ease of discussion, we first introduce some frequently used symbols and notation.
We use $i$ to denote the super-round number. 
Given a point $v\in P$, 
$g^{(i)}(v)$ denotes 
the grid cell that contains $v$ with side length of $\frac{l_i\rho}{\sqrt{d}}$. 
Observe that, according to the way that $V_{(i+1)}$ is constructed from $V_{i}$ in our algorithm, 
we have $g^{(i)}(v) \subseteq g^{(i+1)}(v)$. 
Moreover, we denote the most bottom-left corner point of $g^{(i)}(v)$ by $v^{(i+1)}$, which indeed is a point in $V_{i+1}$. 
Denote the union of all the tree edges found up to the end of the $i$-th super round by $T_i=\cup_{k\in[1,i]}\T_k$.
\end{toappendix}

\begin{lemmaapprep}\label{lmm:spanning-tree}
$T$ is an Euclidean spanning tree of $P$.
\end{lemmaapprep}
\begin{proof}
We  prove the following claim with mathematical induction:
\begin{claim}\label{claim:no-cycle}
	At the end of super round $i$, 
	\begin{itemize}
		\item (i) any two points $u,v\in P$ in the same grid cell with side length of $l_{i}\rho/\sqrt{d}$ are connected in $T_i$, and 
		\item (ii) there is no cycle in $T_i$.  \lipicsEnd
	\end{itemize}
\end{claim}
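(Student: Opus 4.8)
I would prove Claim~\ref{claim:no-cycle} by induction on the super-round number $i$, carrying a strengthened hypothesis that links three objects: the connected components of the grid graph $G_i$, the labels $x.id$ and $x.cid$ on nodes of $V_i$, and the connected components of the accumulated forest $T_i$. Besides (i) and (ii), at the end of round $i$ I would maintain:
\begin{itemize}
  \item (location) for every $x\in V_i$ the point $x.pt$ lies inside the round-$(i-1)$ grid cell whose bottom-left corner is $x$ (vacuous for $i=1$);
  \item (propagation) if $x,y\in V_i$ lie in the same connected component of $G_i$, then $x.pt$ and $y.pt$ are connected in $T_i$;
  \item (label correspondence) for $x,y\in V_i$, $x.id=y.id$ iff $x.pt$ and $y.pt$ lie in the same connected component of $T_{i-1}$ (with $T_0=\varnothing$, so "same component'' means equal);
  \item (locality) if $x,y\in V_i$ satisfy $x.id=y.id$, then $x$ and $y$ are joined in $G_i$ by a path using only zero-weight edges.
\end{itemize}
The (location) invariant follows from the nesting $g^{(i-1)}(v)\subseteq g^{(i)}(v)$ already noted; (label correspondence) is maintained because a node $x\in V_i$ inherits $x.id=a.cid$ from some $a\in V_{i-1}$ in its defining cell, and $a.cid$ is the $G_{i-1}$-component of $a$, which by (propagation) at round $i-1$ pins down the $T_{i-1}$-component of $a.pt=x.pt$; (locality) is where it is used that $c=s^{1/d^3}$ is asymptotically larger than the constant $\rho$, so that consecutive representatives of a single $G_{i-1}$-component land at mutual distance at most $l_i$ in round $i$ and are hence linked by zero-weight edges of $G_i$.

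\textbf{Base case and part (i).} For $i=1$, $V_1=P$, all labels are distinct, and every edge of $G_1$ has positive weight; if $u,v$ share a cell of side $l_1\rho/\sqrt d$ then $\dist(u,v)\le l_1\rho\le l_1$, so $(u,v)\in\E_1$, hence $u$ and $v$ are connected in the MSF of $G_1$ and therefore in $T_1=\T_1$, which (being the positive-weight part of an MSF, with nothing merged) is a forest. For the inductive step ($i\ge 2$), given $u,v\in P$ in a common cell of side $l_i\rho/\sqrt d$, the representatives $u^{(i)}, v^{(i)}\in V_i$, i.e.\ the bottom-left corners of $g^{(i-1)}(u)$ and $g^{(i-1)}(v)$, also lie in that cell, so $\dist(u^{(i)},v^{(i)})\le l_i\rho\le l_i$ and the edge $(u^{(i)},v^{(i)})$ is in $\E_i$; by (propagation) the points $u^{(i)}.pt$ and $v^{(i)}.pt$ are connected in $T_i$, and by (location) together with part (i) of the induction hypothesis at round $i-1$ (applied inside the cells $g^{(i-1)}(u)$ and $g^{(i-1)}(v)$), $u\sim u^{(i)}.pt$ and $v\sim v^{(i)}.pt$ in $T_{i-1}\subseteq T_i$. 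Chaining the three facts gives $u\sim v$ in $T_i$.

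\textbf{Part (ii).} The only edges added in round $i$ are $\T_i=\{(x.pt,y.pt):(x,y)\in\mathcal T_i'\}$, where $\mathcal T_i'$ is the set of positive-weight edges of the MSF of $G_i$. Using (locality) and the cycle property of minimum spanning trees, inside each connected component of $G_i$ the MSF must connect every $.id$-class internally using zero-weight edges only (otherwise a positive-weight MSF edge could be swapped for a cheaper zero-weight edge crossing its cut); moreover, by (locality) each $.id$-class lies entirely within a single component of $G_i$. Hence contracting all zero-weight edges of the (forest) MSF of $G_i$ collapses every $.id$-class to one vertex and leaves a forest whose edge set is exactly $\mathcal T_i'$. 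Applying $x\mapsto x.id$ and invoking (label correspondence), this is a forest on the set of connected components of $T_{i-1}$, and every edge of $\T_i$ joins two \emph{distinct} such components. Therefore $T_i=T_{i-1}\cup\T_i$ remains acyclic, completing the induction. As a consequence $T$ is acyclic; since $l_i$ eventually exceeds $\sqrt d\,\Delta$, part (i) shows all of $P$ becomes connected, so $T$ spans $P$, which is Lemma~\ref{lmm:spanning-tree}.

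\textbf{Main obstacle.} The delicate point is keeping (label correspondence) valid in \emph{both} directions jointly with (propagation): one must show the labels $x.id$ induced on $V_i$ from the $G_{i-1}$-components are neither finer nor coarser than the partition of $P$ into $T_{i-1}$-components. The forward direction (same $T_{i-1}$-component $\Rightarrow$ same label) requires the converse of (propagation) at round $i-1$ --- that $T_{i-1}$-connectivity of $.pt$-values implies $G_{i-1}$-connectivity of the corresponding $V_{i-1}$-nodes --- which is itself obtained by decomposing a $T_{i-1}$-path into $T_{i-2}$-subpaths and $\T_{i-1}$-edges and lifting each piece back into $G_{i-1}$ via (locality). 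The second subtlety is proving (locality) itself, i.e.\ that once $c=s^{1/d^3}$ is large relative to the fixed $\rho$, the zero-weight edges of $G_i$ really do keep each former component connected; the few initial super rounds where $c$ is not yet "large'' correspond to an input small enough to fit on one machine and can be handled separately as a base case.
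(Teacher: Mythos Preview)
Your proof and the paper's follow the same inductive skeleton, and your argument for part~(i) matches the paper's almost verbatim. For part~(ii) the paper argues by contradiction via two cases (a putative cycle either lies within a single $T_k$-component or spans several), while you contract the zero-weight MSF edges and read off a forest on $.id$-classes; these are two packagings of the same underlying fact.

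The genuine difference is rigour, and it cuts in your favour. In both of its cases the paper silently uses that if $u,v\in P$ lie in the same $T_k$-component then $u^{(k+1)}.id=v^{(k+1)}.id$ --- precisely your (label correspondence) --- and in Case~2 it further uses that nodes of $V_{k+1}$ with equal $.id$ are linked by zero-weight edges of $G_{k+1}$ --- your (locality). Neither is justified in the paper, and you are right to isolate the converse of (propagation) as the crux: that is exactly the step the paper glosses over and your auxiliary invariants supply. Your sketch for that converse (decompose a $T_{i-1}$-path into $\mathcal T_{i-1}$-edges and $T_{i-2}$-segments, lift each piece into $G_{i-1}$ via (locality)) is correct once one observes, from (location), that $x\mapsto x.pt$ is injective on each $V_j$, so endpoints of lifted segments agree. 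Two small corrections: $c=s^{1/d^3}$ is fixed across all super rounds, so the caveat in your last paragraph is not about ``initial super rounds'' but simply about $n$ being large enough that $c\ge \sqrt d\,(1+2\rho)/\rho$ (otherwise the whole input fits on one machine); and both (locality) and the base case implicitly use $\rho\le 1$ (so that two points sharing a cell of side $l_i\rho/\sqrt d$ are within $l_i$ of each other), which you should state explicitly.
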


When $i=1$, the claim holds trivially. Suppose for all $i \leq k$, the claim holds. 
Next, we consider the case of $i = k + 1$.
\begin{itemize}
	\item  Let $u,v\in P$ denote two points in the same grid cell in the $(k+1)$-th super round, i.e., $g^{(k+1)}(u)=g^{(k+1)}(v)$.
		Recall that $u^{(k+1)}$ and $v^{(k+1)}$ are the points representing $u$ and $v$ in $V_{i+1}$, respectively, and they are essentially the most left-bottom points of the cells $g^{(k)}(u)$ and $g^{(k)}(v)$, respectively.

If $u^{(k+1)} = v^{(k+1)}$, then $u$ and $v$ are already in the same grid cell in the $k$-th round, i.e., $g^{(k)}(u) = g^{(k)}(v)$.
By the induction assumption, $u$ and $v$ are already connected in $T_k$, and hence in $T_{k+1}$ as $T_k \subseteq T_{k+1}$.
   
Otherwise, since $u$ and $v$ 
are in the same grid cell imposing on $G_{k+1}$, 
equivalently, $u^{(k+1)}$ and $v^{(k+1)}$ are in the same grid cell.
Thus, 
the distance between $u^{(k+1)}$ and $v^{(k+1)}$ 
is at most $\rho l_{k+1}$, and hence, by the edge formation rule $\mathcal{E}_{k+1}$, 
the edge between $u^{(k+1)}$ and $v^{(k+1)}$ is considered when computing the MSF on $G_{k+1}$.
Therefore, $u^{(k+1)}.pt$ and $v^{(k+1)}.pt$ must be connected in $T_{k+1}$.
On the other hand, by the induction assumption, 
$u$ and $u^{(k+1)}.pt$ are connected in $T_k$ because they are in the same grid cell in the $k$-th super round, and for the same reason,
$v$ and $v^{(k+1)}.pt$ are connected in $T_k$.
Therefore, $u$ and $v$ are connected in $T_{k+1}$. 
This completes the proof of Bullet (i) of Claim~\ref{claim:no-cycle}.

\item Next, we prove $T_{k+1}$ has no cycle by contradiction.
Suppose there is a cycle in $T_{k+1}$; since, by the induction assumption, we know that $T_k$ has no cycle,
thus, there must be a set $E_{\text{bad}}$ of edges in $\mathcal{T}_{k+1}$ adding which to $T_k$ introduces a cycle.
There are two cases of this cycle:
\begin{itemize}
	\item Case 1: this cycle is formed just within one connected component $C$ (which is a tree) of $T_k$.
		It implies that there is an edge $(u,v)$ in $\mathcal{T}_{k+1}$ where $u$ and $v$ are in the same connected component $C$ in $T_k$.
		In other words, $u^{(k+1)}$ and $v^{(k+1)}$ must have the same connected component id before the computation of MST on $G_{k+1}$ and hence, the edge between them must have zero weight. 
		Therefore, $(u^{(k+1)}, v^{(k+1)})$ must not be in $\mathcal{T}'_{k+1}$ (which has no zero-weight edges)  and thus, $(u,v)$ must not be in $\mathcal{T}_{k+1}$. 
		Contradiction.

	\item Case 2: this cycle is formed across a sequence of connected components $C_1, C_2, \ldots, C_q$ in $T_k$, such that $C_1$ and $C_q$ are connected and any two consecutive components $C_j$ and $C_{j+1}$ are connected.	
		And all the edges connecting them must come from $\mathcal{T}_{k+1}$ because $T_k$ has no cycle.
		Consider the corresponding edges in $\mathcal{T}'_{k+1}$  of these edges in $\mathcal{T}_{k+1}$.
		Denote the set of these edges in order as $\{(u_j^{(k+1)}, v_j^{(k+1)})\}$, where $u_j^{(k+1)}, v_j^{(k+1)} \in V_{k+1}$ for all $j\in [1, q]$.
		Thus, before the computation of MSF on $G_{k+1}$, 
		we must have that 
the connected component id associated with the ``right endpoint'' of an edge must be the same as the connected component id associated to the ``left endpoint'' of the next edge in the sequence, i.e., 
		$v_j^{(k+1)}.id = u_{j+1}^{(k+1)}.id$. 
		Therefore, according to the edge formation rule $\mathcal{E}_{k+1}$, 
		all these endpoints must be connected with zero-weight edges in the MSF of $G_{k+1}$.
		In other words, there will be a cycle in the MSF of $G_{k+1}$ which is a contradiction.
\end{itemize}
Therefore, in any case, $E_{\text{bad}}$ does not exist and therefore, $T_{k+1}$ contains no cycle.
This completes the proof of Bullet (ii) of Claim~\ref{claim:no-cycle}.
\end{itemize}

Thus, Claim~\ref{claim:no-cycle} holds for all super rounds.
When the last super round ends, 
all the points are connected in $T$ and there is no cycle. 
So $T$ is an Euclidean spanning tree of $P$.
This completes the proof of Lemma~\ref{lmm:spanning-tree}.
\end{proof}

\begin{lemmaapprep}[Edge-Wise Approximation Guarantee]
\label{lem:edgewiseproof}
	Consider an exact EMST $T^*$ of $P$; for any edge $(u, v) \in T^*$, there must exist a path $\mathcal{P}(u,v)$ from $u$ to $v$ in $T$ such that every edge $(x,y)$ on this path has weight $w(x,y) \leq (1 + 2\rho) \cdot w(u,v)$.
\end{lemmaapprep}

\begin{proof}
	For any edge $(u,v) \in T^*\setminus T$, that is, $(u,v)$ is in $T^*$ but not in $T$, 
	consider the path from $u$ to $v$ in $T$, denoted by $\P(u,v)$. 
	Furthermore, let $i$ be super round number, where 
	$u$ and $v$ are connected for the {\em first time} in $T_i$; recall that $T_i$ is the sub-tree of $T$ computed up to the end of the $i$-th super round.
	Observe that this implies $i>1$, because otherwise $T$ must also contain $(u,v)$ according to computation in the first super round of our algorithm. 
	Thus, we have: (i) $u^{(k)}.pt=u,v^{(k)}.pt=v$ holds for all $k\leq i$, (ii) 
	$u,v$ are not connected in $T_{i-1}$. 
	In other words, in the $(i-1)$-st super round, 
	$u^{(i-1)}$ and $v^{(i-1)}$ are not connected in $G_{i-1}$.
	It further implies that $dist(u^{(i-1)},v^{(i-1)})>l_{i-1} \Rightarrow dist(u,v)>l_{i-1}-\rho l_{i-2}$, which gives a lower bound on $dist(u,v)$.

	In the following, we categorize the edges in $\P(u,v)$ by the super round number $1 < k \leq i$ when they are added in $T_i$.

\begin{itemize}
    \item \textbf{Category 1: $k = i$.} For any edge $(x,y)$ in $\T_i\cap \P(u,v)$, 
$dist(u^{(i)},v^{(i)}) \geq dist(x^{(i)},y^{(i)})$ holds, 
because otherwise, $(u,v) \in T_i$ contradicting the fact that $(u,v) \in T^*\setminus T$. 
By the triangle inequality, we have $dist(x,y)\leq dist(u,v) + \rho l_{i-1}$ 
Combining this inequality with the lower bound on $dist(u,v)$, 
we have $\frac{dist(x,y)}{dist(u,v)}\leq 1+2\rho$. 
\item \textbf{Category 2: $k < i$.} For any edge $(x,y)$ in $\T_k\cap \P(u,v)$, where $k\leq i-1$, we have $dist(x^{(k)},y^{(k)})\leq l_k\Rightarrow dist(x,y)\leq l_{i-1}+\rho l_{i-2}$. Likewise, $\frac{dist(x,y)}{dist(u,v)}\leq 1+2\rho$ holds.
\end{itemize}

Therefore, for every edge $(x,y)$ on $\P(u,v)$, $dist(x,y) \leq (1 + 2\rho) \cdot dist(u,v)$ holds, and the lemma follows.
\end{proof}

\begin{lemmaapprep}[Overall Approximation Guarantee]\label{lmm:emst-overall}
	Consider an exact EMST $T^*$ of $P$; the total edge weight of $T$ is at most $(1 + 2\rho)$ times the total edge weight of $T^*$. 
\end{lemmaapprep}
\begin{proof}
We sort the edges in $T^{*}$ in ascending order w.r.t. their weights, $e^{*}_1, e^{*}_2,\ldots , e^{*}_{n-1}$. 
For each $e^{*}_i=(u_i,v_i)$, we denote $\P_i$ denote the path in $T$ from $u_i$ to $v_i$.
Let $E_i=\cup_{k\leq i}\P_k$ for $i\in [1,n-1]$ denote the union of the edges that formed by those paths. 
So clearly $|E_i|\geq i$ and for any edge $e\in E_i$, we have $w(e)\leq (1+2\rho)\cdot w(e^*_i)$ according to Lemma~\ref{lem:edgewiseproof}.
We can show the following claim: 
\begin{claim}
For any $1\leq i\leq n-1$, we can always find a subset $E'_i\subseteq E_i$ that $|E'_i|=i$ and $\sum_{e\in E'_i} w(e)\leq (1+2\rho)\sum_{k\leq i} w(e^*_k)$. \lipicsEnd
\end{claim}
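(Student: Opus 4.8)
The plan is to derive the claim from Hall's marriage theorem, by producing, for the index set $\{1,\dots,i\}$, a system of distinct representatives among the edge sets $E(\mathcal{P}_1),\dots,E(\mathcal{P}_i)$ of the tree paths. Recall from Lemma~\ref{lmm:spanning-tree} that $T$ is a spanning tree of $P$; hence for each $k$ the path $\mathcal{P}_k$ is the \emph{unique} path in $T$ between the endpoints $u_k,v_k$ of $e^*_k$, so it coincides with the path supplied by the edge-wise approximation guarantee (Lemma~\ref{lem:edgewiseproof}), and therefore every edge $e\in E(\mathcal{P}_k)$ satisfies $w(e)\le (1+2\rho)\,w(e^*_k)$. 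Given this, it suffices to find an injection $\phi:\{1,\dots,i\}\to E_i$ with $\phi(k)\in E(\mathcal{P}_k)$ for every $k$: then $E'_i:=\{\phi(1),\dots,\phi(i)\}$ has exactly $i$ edges, is contained in $E_i=\bigcup_{k\le i}\mathcal{P}_k$, and has total weight $\sum_{k\le i}w(\phi(k))\le (1+2\rho)\sum_{k\le i}w(e^*_k)$, which is precisely the claim.

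To obtain $\phi$, I would form the bipartite graph with left part $\{1,\dots,i\}$, right part $E_i$, and an edge from $k$ to $e$ whenever $e\in E(\mathcal{P}_k)$, and I want a matching saturating the left part. By Hall's theorem this reduces to checking $\bigl|\bigcup_{k\in S}E(\mathcal{P}_k)\bigr|\ge |S|$ for every $S\subseteq\{1,\dots,i\}$. Set $A:=\bigcup_{k\in S}E(\mathcal{P}_k)\subseteq E(T)$; since $A$ is a set of edges of the tree $T$, the graph $(P,A)$ is a forest, say with $c$ connected components, so $|A|=n-c$. The key point is that for every $k\in S$ the two endpoints $u_k,v_k$ lie in the \emph{same} component of $(P,A)$, because $\mathcal{P}_k$ is a $u_k$--$v_k$ path whose edges all belong to $A$. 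Now the edges $\{e^*_k:k\in S\}$ are $|S|$ distinct edges of the tree $T^*$ and hence form a forest; each is absorbed into a single component of $(P,A)$, and the sub-forest landing in a component on $|C|$ vertices has at most $|C|-1$ edges. Summing over components gives $|S|\le \sum_C(|C|-1)=n-c=|A|$, which is exactly Hall's condition; applying it yields $\phi$ and completes the argument (and, taking $S=\{1,\dots,i\}$, re-derives $|E_i|\ge i$).

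I expect the main obstacle to be the double-counting step that verifies Hall's condition: it is the one place where the two trees $T$ and $T^*$ interact, using acyclicity of $A\subseteq E(T)$ to read $|A|=n-c$ off the component sizes and acyclicity of $\{e^*_k:k\in S\}\subseteq E(T^*)$ to bound how many of these edges can fall inside one component. Everything else---reducing the claim to a system of distinct representatives, converting such a system into $E'_i$, and bounding the weight---is routine once Lemma~\ref{lmm:spanning-tree} and Lemma~\ref{lem:edgewiseproof} are in hand. If one prefers to avoid invoking Hall's theorem explicitly, the same statement can instead be proved by induction on $i$, carrying along a system of distinct representatives for $\mathcal{P}_1,\dots,\mathcal{P}_{i-1}$ and extending it when $\mathcal{P}_i$ is added via an augmenting-path exchange; this is just a self-contained proof of the needed instance of Hall's theorem, and I would keep whichever version is shorter.
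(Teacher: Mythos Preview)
Your argument is correct. The Hall-condition verification is clean and the system-of-distinct-representatives approach yields exactly the claimed $E'_i$.

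The paper, however, takes a noticeably shorter route that avoids Hall's theorem entirely. It first observes (just before the claim) that \emph{every} edge $e\in E_i$ satisfies $w(e)\le(1+2\rho)\,w(e^*_i)$, because $e$ lies on some $\mathcal{P}_j$ with $j\le i$ and the $e^*_k$ are sorted by weight. This means you do not need $\phi(k)\in E(\mathcal{P}_k)$ at all: any edge in $E_{k+1}\setminus E'_k$ is good enough at step $k+1$. The paper's induction is then one line: given $E'_k\subseteq E_k\subseteq E_{k+1}$ with $|E'_k|=k$, since $|E_{k+1}|\ge k+1$ pick any $e_{k+1}\in E_{k+1}\setminus E'_k$ and set $E'_{k+1}=E'_k\cup\{e_{k+1}\}$; the weight bound follows because $w(e_{k+1})\le(1+2\rho)\,w(e^*_{k+1})$. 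No matching, no augmenting paths.

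What your approach buys is a sharper per-index bound ($w(\phi(k))\le(1+2\rho)\,w(e^*_k)$ rather than $\le(1+2\rho)\,w(e^*_{k+1})$), and it would still work if the $e^*_k$ were enumerated in an arbitrary order. What the paper's approach buys is brevity: the sorting assumption makes the SDR machinery unnecessary, and the only nontrivial ingredient that remains is the inequality $|E_i|\ge i$, which is exactly the $S=\{1,\dots,i\}$ case of the Hall condition you verified.
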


This can be proved with mathematical induction. For $k=1$ case, it holds trivially by putting an arbitrary edge in $E_1$ into $E'_1$. 
For $i\geq 2$ cases, suppose the claim holds true for $i=k$, which means we can find $E'_k$ confirms the claim. 
We know that $|E_{k+1}|\geq k+1$ and $|E'_k|=k$.
Let $e_{k+1}$ be an arbitrary edge in $E_{k+1}-E'_k$.
According to Lemma~\ref{lem:edgewiseproof},
$w(e_{k+1})\leq (1+2\rho)\cdot w(e^*_{k+1})$ holds.
We define $E'_{k+1}=E'_k\cup \{e_{k+1}\}$, which proves the claim.

When $i=n-1$, $E_{n-1}=T$ and the claims indicates that $T$ achieves the overall approximation guarantee.

This completes the proof of Lemma~\ref{lmm:emst-overall}.

\end{proof}

\vspace{-1mm}
\section{$O(1)$-Round Approximate DBSCAN}
\label{sec:dbscan}

\subsection{DBSCAN and $\rho$-Approximate DBSCAN}

\subparagraph{Core and Non-Core.}
Let $P$ be a set of $n$ points in $d$-dimensional Euclidean space $\mathbb{R}^d$, where the dimensionality $d \geq 2$ is a constant.
For any two points $u, v \in P$, the Euclidean distance between $u$ and $v$ is denoted by $\dist(u,v)$. 
There are two parameters: (i) a radius $\eps \geq 0$ and (ii) core point threshold $\minPts \geq 1$ which is a constant integer.
For a point $v \in P$, define $B(v, \eps)$ as the ball center at $v$ with radius $\eps$. 
If $B(v,\eps)$ covers at least $\minPts$ points in $P$, then $v$ is a {\em core} point. 
Otherwise, $v$ is a {\em non-core} point. 

\subparagraph{Primitive Clusters and Clusters.}
Consider an implicit {\em core graph} $G = (P_{\text{core}}, \mathcal{E})$, 
where $P_{\text{core}}$ is the set of all the core points in $P$, 
and there exists an edge between two core points $u$ and $v$ if $\dist(u,v) \leq \eps$.
Each connected component $C$ of $G$ is defined as a {\em primitive cluster}.   
For a primitive cluster $C$, let $C'$ be the set of all the non-core points $p \in P$ such that $\dist(p, C) = \min_{u \in C} \dist(p, u) \leq \eps$. The union of $C \cup C'$ is defined as {\em DBSCAN cluster}.

\subparagraph{The DBSCAN Problem.} Given a set of points $P$, a radius $\eps \geq 0$ and a core threshold $\minPts \geq 1$, 
the goal of the DBSCAN problem is to compute all the DBSCAN clusters. 

\subparagraph{$\rho$-Approximate DBSCAN.}
Given an extra approximation parameter, $\rho \geq 0$, 
the only difference is in the definition of 
the primitive clusters.
Specifically, consider a conceptual {\em $\rho$-approximate core graph} $G_\rho = (P_{\text{core}}, \mathcal{E}_\rho)$,
where 
the edge formation rule $\mathcal{E}_\rho$ works as follows. 
For any two core points $u, v \in P_{\text{core}}$:
if $\dist(u,v) \leq \eps$, $\mathcal{E}_\rho(u,v)$ must return that $(u,v)$ is in $G_\rho$;
if $\dist(u,v) > (1 + \rho) \eps$, $\mathcal{E}_\rho(u,v)$ must return that $(u,v)$ does not exist in $G_\rho$;
otherwise, what $\mathcal{E}_\rho(u,v)$ returns does not matter.

Due to the does-not-matter case, 
$G_\rho$ is not unique.
Consider a graph $G_\rho$; each connected component $C$ of $G_\rho$ is defined as a primitive cluster with respect to $G_\rho$.
And an $\rho$-approximate DBSCAN cluster is obtained by including non-core points to a primitive cluster $C$ in the same way as in the exact version.
The goal of the $\rho$-approximate DBSCAN problem is to compute the $\rho$-approximate DBSCAN clusters with respect to an arbitrary $G_\rho$.

\subsection{Our MPC Algorithm}
\label{sec:ourMPCalgos}
Our approximate DBSCAN algorithm consists of three main steps: 
(i) identify core points; (ii) compute primitive clusters from a $G_\rho$; and (iii) assign  non-core points to primitive clusters. As shown in the existing $\rho$-approximate DBSCAN algorithm in~\cite{gan2017hardness},
the third step, assigning non-core points, can be achieved by an analogous algorithm of core point identification.
Next, we give an MPC algorithm for the first two steps.

\subsubsection{Identify Core Points}
In this section, 
we show how to identify all the core points for a DBSCAN problem instance in $O(\frac{1}{\alpha})$ MPC rounds.
Our goal is to 
label each point $p\in P$, as to whether or not it is a {\em core point}.
In the following, 
we assume that the constant parameter $\alpha$ in the MPC model satisfies~$\alpha \in (\frac{1}{2},1)$, and hence, $m\leq s$, that is, the number of machines is no more than the local memory size $s$.

Consider a grid imposed in the data space $\mathbb{R}^d$ with grid-cell length 
$l=\varepsilon/\sqrt{d}$.
Each grid cell,~$g$, is represented by the coordinates $(x_1, x_2, \ldots, x_d)$ of its {\em left-most corner}, namely, each grid cell $g$ is essentially a hyper cube $[x_1l,(x_1+1)l)\times\cdots\times[x_dl,(x_d+1)l)$. 
Each point $p\in P$ is contained in one and exactly one grid cell, denoted by $g(p)$. 
The size of a grid cell $g$, denoted by $|g|$, is defined as the number of points in $P$ contained in $g$.
If $|g| >0$, then $g$ is a {\em non-empty} grid cell.
If a non-empty grid cell $g$ contains at least $\minPts$ points of $P$, 
then $g$ is a {\em dense grid cell}.
Otherwise, $g$ is a {\em sparse grid cell}.
Given a non-empty grid cell $g$, the neighbour grid cell set of $g$ is defined as $\Nei_g=\{g' \neq g\mid \exists p \in g, q\in g', s.t.\ \dist(p,q)\leq\varepsilon \}$ (note here that $p$ and $q$ are not necessarily points in $P$). 
A grid cell, $g'$, in $\Nei_g$ is called a neighbour grid cell of $g$. 
It can be verified that $|\Nei_g| < (2\sqrt{d}+3)^d \in O(1)$~\cite{gan2015dbscan,DBLP:journals/jgaa/GanT18}.

\smallskip
\noindent
Our core point identification algorithm contains the following three steps:

\subparagraph{Put Points into Grid Cells.} 
Given a point~$p$, the coordinates of the grid cell~$g(p)$ which contains $p$ can be calculated with the coordinates of~$p$. 
Define $P_g= P \cap g$ as the set of all the points in $P$ falling in $g$.
By sorting all the points in $P$ by the coordinates of the grid cells they belong to in lexicographical order,
which can be achieved in $O(\log_s n)$ MPC rounds~\cite{DBLP:conf/isaac/GoodrichSZ11}, 
all the points in $P$ can be arranged in the machines with the following properties:
\begin{itemize}
    \item Each machine stores points from consecutive 
    (in the sorted order) non-empty grid cells.
    Those non-empty grid cells are called the \textit{support 
    grid cells} of the machine. 
	\item The points in a same grid cell $g$ are stored in either only one machine or a contiguous sequence of machines. 
Thus, the ID's of the machines that store~$P_g$ --  the storage locations of the grid cell~$g$ -- 
constitute an interval and 
can be represented by the left and right endpoints (i.e., the smallest and largest machine ID's) of this interval in~$O(1)$ words.
	\item The number of the support grid cells of each machine is bounded by $O(s)$. And the total non-empty grid cell count is bounded by~$O(n)$.
\end{itemize}

\subparagraph{Calculate Non-empty Neighbour Grid Cell Storage Locations.} 
In this step, we aim to, for each non-empty grid cell, calculate the locations of all its {\em non-empty} neighbour grid cells. 
Since the size of $\Nei_g$ is bounded by $O((2\sqrt{d}+3)^d) \in O(1)$, 
it takes $O(1)$ space to store the location intervals of its all (possibly empty) neighbour grid cells. 
The pseudo code of the following process are shown in Algorithm~\ref{algo:neiloc}.

At high level, our algorithm takes four {\em super rounds} to calculate non-empty neighbour grid cell storage locations, as follows:
\begin{itemize}
\item Super Round 1: For each support grid cell $g_j$ in a machine $M_i$, for each of its possible neighbours $g'$, 
our algorithm builds a  4-tuple $(g_j.\coord,M_i, g'.\coord,g'.\loc)$, 
where $g.\coord$ denotes the coordinates of grid cell $g$, $M_i$ denotes the machine ID and $g'.\loc$ term is a placeholder with value NULL at the current stage and will be filled with the storage location of $g'$ if $g'$ is non-empty.  
Next, these 4-tuples are \underline{sorted} by the third term,~$g'.\coord$, across all machines.
Therefore, at the end of this super round, each machine stores a subset of these $4$-tuples in its memory.
\item Super Round 2: Each machine $M$ \underline{broadcasts} the minimal and maximal $g'.\coord$ value of the $4$-tuples it stores, denoted by $[M.\min(g'.\coord), M.\max(g'.\coord)]$.
As a result, at the end of this super round, each machine, $M_i$, has the information of the $g'.\coord$ value range of each other.
\item Super Round 3:  Each machine $M_i$ \underline{sends} a $2$-tuple 
$(g_j.\coord,g_j.\loc)$, for each of its support grid cells $g_j$, to those machines $M$ having $g_j.\coord \in [M.\min(g'.\coord),M.\max(g'.\coord)]$. 
Specifically, the second term of the $2$-tuple, $g_j.\loc$, is the storage location of $g_j$, which was \emph{computed} in the previous ``put points into grid cells'' phase.
\item Super Round 4:
After receiving all $g_j.\loc \in [M.\min(g'.\coord), M.\max(g'.\coord)]$ in Super Round 3, 
the previous placeholder, the forth term $g'.\loc$, in each $4$-tuple stored in machine $M$ now can be filled properly in $M$'s local memory.
Each of the resultant $4$-tuples $(g_j.\coord, M_i, g'.\coord, g'.\loc)$ are then \underline{sent} back to machine $M_i$ accordingly.
\end{itemize}

\begin{algorithm}[t]
	\caption{Location of Non-empty Neighbor Grid Cell}\label{algo:neiloc}
\KwIn{A set of $m$ machines $(M_1,\ldots,M_m)$, each machine~$M_i$ contains points from $k^{(i)}$ grid cells $(g_1^{(i)},\cdots,g_{k^{(i)}}^{(i)})$, the storage location of the grid cell $g.\loc$ }
\KwOut{Given a non-empty grid cell~$g$, calculate the storage location of every $g'\in \Nei_g$}
\For{machine $M_i$, $i\in[1,m]$ simultaneously}{
\For{each support grid cell~$g_j$, $j\in [1,k^{(i)}]$}{
    build $4$-tuples $(g_j.\coord,M_i, g'.\coord,g'.\loc)$ for each $g'\in \Nei_{gj}$\;
    //the fourth term is a placeholder with value NULL now
}
}
invoke a sorting algorithm for all of the $4$-tuples based on the $g'.\coord$\;
\For{machine $M_i$, $i\in[1,m]$ simultaneously}{
    broadcast the minimal and maximal $g'.\coord$ \;
    //each machine also receives that information during the communication phase\;
    \For{each grid cell $g_j$, $j\in [1,k^{(i)}]$}{
    send the $2$-tuple $(g_j.\coord,g_j.\loc)$ to the machines $M$ s.t. $g_j.\coord \in [M.\min(g'.\coord),M.\max(g'.\coord)]$\;
}
}
//each machine also receives the 2-tuple during the communication phase\;
merge the 2-tuple with 4-tuple to fill in the placeholder\;
\For{each 4-tuple with $g'.\loc!=\text{NULL}$}{
    send the 4-tuple back to $M_i$\;
}
//each machine receive the backward 4-tuple during the communication phase\;
\end{algorithm}

\begin{lemmaapprep}
\label{lem:sendsizebound}
The above algorithm, i.e., Algorithm~\ref{algo:neiloc}, takes  
$O(1)$ rounds. In each  round, the amount of data sent and received by each machine is bounded by $O(s)$.
\end{lemmaapprep}
\begin{proof}
Clearly, there are four super rounds in Agorithm~\ref{algo:neiloc}.
Furthermore, it can be verified that each super round can be implemented with $O(1)$ atomic MPC operations, e.g., sorting and broadcasting. As shown in Appendix~\ref{app:op}, each of these MPC operations can be performed in $O(1)$ rounds. Thus, the total round number is bounded by $O(1)$.

Next, we analyse the amount of data communicated in each round one by one.

Since each machine only contains at most~$O(s)$ support grid cells, 
and 
each support grid cell has $O(1)$ neighbour grid cells, 
the number of $4$-tuples that a machine generates is bounded by~$O(s)$.

Broadcasting the minimal and maximal coordinates of~$g'$ in a machine takes~$O(m)$ outgoing words and~$O(m)$ incoming words. By the fact that $\alpha\geq \frac{1}{2}$, we have $O(m) \subseteq O(s)$.

For the step of sending the $2$-tuples, we first analyse the incoming message size for each machine. 
For machine~$M$, denoted by $\mathcal{G}_M$ the non-empty grid cells with coordinate falling in $M$'s $g'.\coord$ range. Each grid cell in $\mathcal{G}_M$ implies that $M$ will receive a $O(1)$-word message from the machine that stores $g$, i.e., $g.\loc$. 
Therefore, the total number of $O(1)$-word messages received by~$M$ is bounded by 

		    \begin{equation*}
	    \sum_{g\in \mathcal{G}_M}|g.\loc|\leq |\mathcal{G}_M|+m-1 \in O(s)\,. 
    \end{equation*}

As for the total number of the outgoing messages, 
if machine~$M$ contains~$k$ grid cells, for each grid cell~$g_i$, for $i\in[1,k]$, it sends a 2-tuple to a set $\mathcal{M}_i$ of machines.
Since these 2-tuples are sorted by the~$g'.\coord$, 
each~$\mathcal{M}_i$ contains a contiguous set of machines and the total number of outgoing $O(1)$-word messages is bounded by 
		    \begin{equation*}
	    \sum_{i\in[1,k]}|\mathcal{M}_i|\leq k-1+m \in O(s)\,. 
    \end{equation*}

The final step is to send the 4-tuples to where it was generated, so the total number of the outgoing and incoming $O(1)$-word messages are bounded by~$O(s)$.
\end{proof}

\subparagraph{Output the Point Labels.} 
Algorithm~\ref{algo:coredete} gives the pseudo code of the detailed implementation. 
A running example of Algorithm~\ref{algo:coredete} is shown in Figure~\ref{fig:dbcore}.
This algorithm labels each point in $P$ as whether or not it is a core point. 
If a grid cell $g$ is dense, i.e., $|g| \geq \minPts$, 
all of the points in $P_g$ are labelled as core points. 
For each point $p$ in each sparse grid cell $g(p)$ (i.e., with $|g(p)| < \minPts$) in each machine $M$, 
we can count the number of the points of $P$ in the hyper ball centred at $p$ with radius $\varepsilon$, denoted by $B(p, \varepsilon)$. 
Observe that a point $q$ falls in $B(p,\varepsilon)$ only if either $p$ and $q$ are in the same grid cell or $g(q)$ is a neighbour grid cell of $g(p)$. 
Our algorithm, thus, sends $p$ to all the storage locations of (i.e., the machines storing) the neighbour grid cells of $g(p)$.
In the local memory of each of those machines, 
the distances between $p$ and all those points in the neighbour grid cells can be computed.
And then, the numbers of points falling in $B(p, \varepsilon)$ in those machines are sent back to machine $M$ (where $p$ is stored).
Therefore, the points $p$ in sparse cells can be labelled accordingly.

\begin{algorithm}[t]
\caption{Output the Point Label}\label{algo:coredete}
\KwIn{A set of $m$ machines $(M_1,\cdots,M_m)$, machine $M_i$ contains points from $k^{(i)}$ grid cells $(g_1^{(i)},\cdots,g_{k^{(i)}}^{(i)})$, the storage location of the grid cell $g.\loc$}
\KwOut{For each point $p$, assign to it a label as core or non-core}

\For{machine $M_i$, $i\in[1,m]$ simultaneously}{
\For{each grid cell $g_j$, $j\in [1,k^{(i)}]$}{
    \If{$g_j$ is a dense grid cell}{
        $p.\type\gets \text{core}$\;
    }\Else{
        \For {each point $p$ in $g_j$}{
        \For{each non-empty neighbor $g'$ of $g_j$}{
            send $p.\coord$ to all machines in $g'.\loc$\;
            }
        }
    }
}
}
//each machine receives a set of points $P_{in}$\;
\For{machine $M_i$, $i\in[1,m]$ simultaneously}{
\For{each point $p$ in $P_{in}$}{
    count the point number that falls in $B(p,\varepsilon)$\;
    send that number back to where point $p$ is from\;
}
//each machine receives the number\;
\For{each point $p$ in sparse grid cell}{
    count the total point number that falls in $B(p,\varepsilon)$\;
    identify the type of $p$
}
}
\end{algorithm}

\begin{figure}[t]
  \begin{center}
  \includegraphics[width=1.0\textwidth]{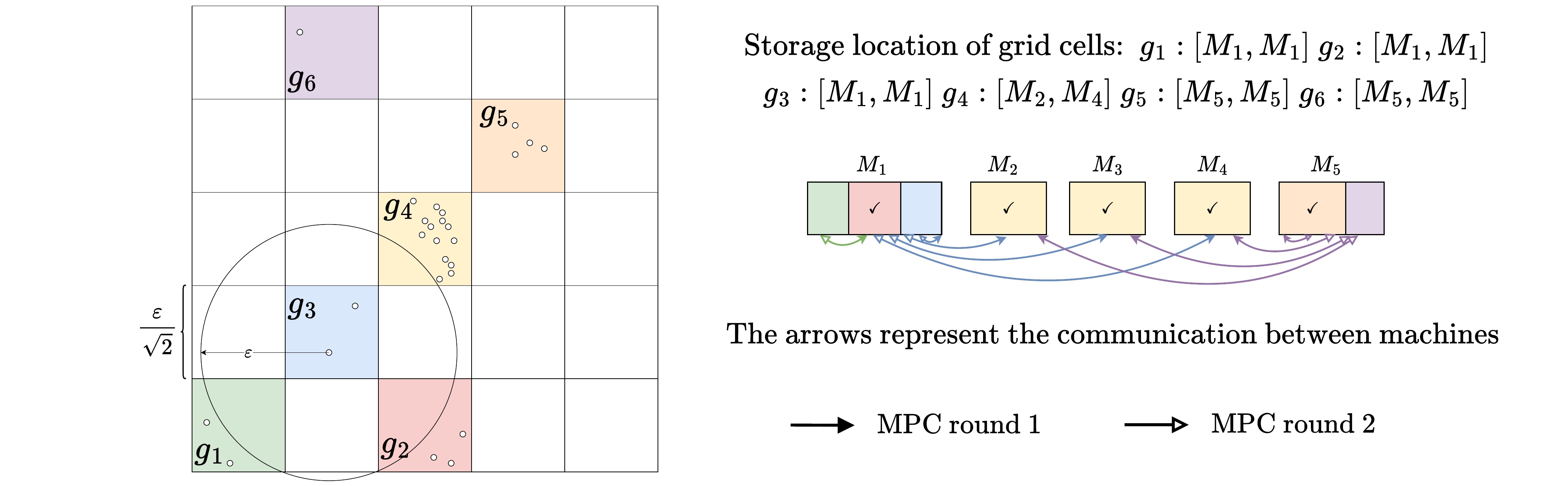}
  \end{center}

  \caption{ A running example of Algorithm~\ref{algo:coredete} with $\minPts=3$. The input points are in six non-empty grid cells stored across five machines as shown in the figure.
Specifically, 
$g_2,g_4,g_5$ are dense grid cells and all the points in them are core points, while $g_1,g_3,g_6$ are sparse grid cells. 
To label the points in these sparse grid cells, Algorithm~\ref{algo:coredete} performs in two MPC rounds. 
Take grid cell $g_3$ which is stored in $M_1$ as an example; 
the neighbour grid cell set of $g_3$ is $\Nei_{g_3}=\{g_1,g_2,g_4\}$, 
and the storage locations of them include: $M_1$ for both $g_1$ and $g_2$, and $M_2, M_3, M_4$ for $g_4$. 
The communications between $M_1$ and $M_1$ (i.e., local computation) and $M_2,M_3,M_4$ are incurred, which are shown by the blue arrows in the figure.
 }
  \label{fig:dbcore}
\end{figure}

\begin{lemmaapprep}
\label{lem:wordsizealgofive}
The above algorithm, i.e., Algorithm~\ref{algo:coredete}, performs two  MPC rounds. 
In each round, the amount of data sent and received by each machine is bounded by $O(s)$.
\end{lemmaapprep}

\begin{proof}
From Algorithm~\ref{algo:coredete}, it is clear that there are only two rounds.

Next, we bound the total number of incoming and outgoing $O(1)$-word messages in the first round.
And then the message number bound of the second round immediately follows from the fact that 
the second round can be regarded as an inverse process of the first one.

Each grid cell can only receive $|g'|<\minPts$ points from its non-empty neighbour grid cells $g'$ and there are at most $O(1)$ non-empty neighbour grid cells for each grid cell. 
Each machine corresponds to at most $O(s)$ grid cells, 
so the total number of incoming $O(1)$-word messages to a machine is bounded by $O(s)$. 
As for the number of outgoing messages, 
Algorithm~\ref{algo:coredete} only sends the point coordinates when the point is in a spare grid cell. 
In each machine, there are at most $O(s)$ such points in total because $\minPts$ is a constant. 
There are only two cases depending on the types of the non-empty neighbour grid cells. 
If a grid cell only has at most $s$ points, this grid cell is a {\em small} grid cell;
otherwise, it is a {\em large} grid cell. 
The storage location of a small grid cell contains at most $O(1)$ machine ID's. 
Sending the point coordinates to small grid cells only takes at most $O(s)$ words in a machine. 
At the same time, there are at most $O(m)$ large grid cells because each machine only contains at most $O(1)$ large grid cells. 
The total number of $O(1)$-word messages received by those large grid cells is thus bounded by $O(m) \in O(s)$. 
Combining the above two cases, we have that the total number of outgoing $O(1)$-word messages of each machine is bounded by $O(s)$.
\end{proof}

\begin{lemma}[Core Point Identification]
\label{lem:corepointidtifi}
Consider a set $P$ of $n$ points; given a radius parameter $\eps \geq 0$ and  a core point threshold, $\minPts \geq 1$,
which is a constant integer, there exists an MPC algorithm with local memory {$\Theta(s) \subseteq  \Theta(n^\alpha)$} per machine, for arbitrary constant $\alpha \in (\frac{1}{2},1)$, which identifies all the core points in {$O(\frac{1}{\alpha})$} rounds. 
\end{lemma}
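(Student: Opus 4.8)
The plan is to show that the three-stage pipeline already described — bucketing the points into grid cells, computing the storage locations of non-empty neighbour cells via Algorithm~\ref{algo:neiloc}, and labelling the points via Algorithm~\ref{algo:coredete} — is correct and stays within the claimed round and memory budgets, so that the lemma falls out of Fact~\ref{theo:sorting}, Lemma~\ref{lem:sendsizebound}, and Lemma~\ref{lem:wordsizealgofive}. For correctness I would first pin down the geometric invariant behind the grid: since every cell is a half-open hypercube of side $l = \eps/\sqrt{d}$, any two points in one cell are at Euclidean distance at most $\sqrt{d}\cdot l = \eps$; and, from the definition of $\Nei$, for any point $p$ every $q\in P$ with $\dist(p,q)\le\eps$ lies in $g(p)$ or in a cell of $\Nei_{g(p)}$. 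Hence a point $p$ in a dense cell ($|g(p)|\ge\minPts$) already has at least $\minPts$ points of $P$ in $B(p,\eps)$ and is correctly marked core; and for $p$ in a sparse cell, $|B(p,\eps)\cap P|$ equals $|P_{g(p)}|$ plus, over non-empty $g'\in\Nei_{g(p)}$, the number of points of $P_{g'}$ within $\eps$ of $p$ — exactly the quantity Algorithm~\ref{algo:coredete} aggregates by shipping $p$ to the $O(1)$ machines holding $g(p)$'s non-empty neighbours, summing the returned partial counts, and comparing with $\minPts$. This settles every label.

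Next I would tally rounds and memory. The first stage is a single call to the sorting primitive on the $n$ points keyed by grid-cell coordinate (Fact~\ref{theo:sorting}), costing $O(\log_s n)\subseteq O(1/\alpha)$ rounds and producing the layout whose properties are listed in the ``Put Points into Grid Cells'' paragraph — in particular, each non-empty cell's storage location $g.\loc$ is a contiguous machine interval describable in $O(1)$ words. The second stage is Algorithm~\ref{algo:neiloc}, which by Lemma~\ref{lem:sendsizebound} runs in $O(1)$ rounds with $O(s)$ words moved per machine per round; here the hypothesis $\alpha>\frac{1}{2}$, hence $m\le s$, is what keeps the broadcast of coordinate ranges inside the memory bound. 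The third stage is Algorithm~\ref{algo:coredete}, which by Lemma~\ref{lem:wordsizealgofive} takes exactly two rounds with $O(s)$ words per machine per round. Summing gives $O(1/\alpha)+O(1)+O(1)=O(1/\alpha)$ rounds, and since no machine ever handles more than $O(s)$ words in any round, the $\Theta(s)$ local-memory constraint holds throughout.

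I expect the main obstacle to be one of interface consistency rather than any deep argument: one must check that the ``storage location'' $g.\loc$ produced in the first stage and consumed in the later two really is a contiguous machine interval expressible in $O(1)$ words (this relies on the sorted layout, so a cell's points are never split non-contiguously), and that routing each sparse-cell point only to its \emph{non-empty} neighbour cells — together with the small-cell versus large-cell split used in the proof of Lemma~\ref{lem:wordsizealgofive} — is precisely what stops the message counts in Algorithm~\ref{algo:coredete} from exceeding $O(s)$ when some neighbour cells are large. Once these bookkeeping checks are in place, the lemma follows immediately.
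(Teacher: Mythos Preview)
Your proposal is correct and follows essentially the same approach as the paper: the paper's own proof is a one-liner stating that the lemma ``immediately follows from Lemma~\ref{lem:sendsizebound} and Lemma~\ref{lem:wordsizealgofive},'' and your argument simply unpacks this by also accounting for the sorting pass (Fact~\ref{theo:sorting}) in the first stage and spelling out the correctness of the labelling. The extra detail you supply --- the geometric invariant, the explicit round tally $O(1/\alpha)+O(1)+O(1)$, and the interface checks on $g.\loc$ --- is all sound and arguably makes the dependence on the sorting cost (which is where the $O(1/\alpha)$ actually comes from) more transparent than the paper's terse version.
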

\begin{proof}
This lemma immediately follows from Lemma~\ref{lem:sendsizebound} and Lemma~\ref{lem:wordsizealgofive}.
\end{proof}

\subsubsection{Compute Primitive Clusters} 
Our algorithm works as follows:
\begin{itemize}
	\item construct an implicit $(d, c)$-grid graph $G' = (V', \mathcal{E}')$:	
	\begin{itemize}
			\item initialize $V' \leftarrow \varnothing$; impose a grid on $P_{\text{core}}$ with side length of $\frac{1}{2\sqrt{d}}\rho\eps$;
			\item for each non-empty grid cell $g$, add the most {\em bottom-left} corner point $t$ of $g$ to $V'$; 
			\item set $\mathcal{E}'$ to consider those edges $(t_1, t_2)$ with $\dist(t_1, t_2) \leq (1 + \frac{1}{2}\rho) \cdot \eps$ only and returns $\dist(t_1, t_2)$ as its weight;
   as a result, $G'$ is $c$-penetration with $c = \frac{(1 + \frac{1}{2} \rho) \cdot \eps}{\frac{1}{2\sqrt{d}}\rho\eps} \in O(1)$;
	\end{itemize}
\item invoke our MPC algorithm in Theorem~\ref{thm:implicit-connectivity} on the implicit $(d,c)$-grid graph $G'=(V', \mathcal{E}')$ to compute an MSF of $G'$ and associate the connected component ID to each point $t \in V'$;  
\item for each point $t \in V'$, 
	assign $t$'s connected component ID to each core point in the grid cell which $t$ represents; 
\item group all the core points in $P_{\text{core}}$ by their connected component ID's; 
\item return each group as a primitive cluster; denote the set of these primitive clusters by $\mathcal{C}$; 
\end{itemize}

\begin{lemmaapprep}\label{lmm:dbscan-correctness}
    The above primitive cluster computing algorithm returns legal $\rho$-approximate primitive clusters in $O(1)$ rounds with high probability. 
\end{lemmaapprep}
\begin{proof}
It can be verified that except the MSF computation, all other steps can be achieved by sorting which can be performed in $O(1)$ MPC rounds.
	Therefore, combining Theorem~\ref{thm:implicit-connectivity}, the total round number follows.

Next, we show the correctness of our algorithm.
In the following proof, for any core point $p \in P_{\text{core}}$, we use notation $p'$ to denote the point $t \in V'$ whose corresponding cell contains $p$.
First of all, we give two crucial observations:
\begin{itemize}
	\item Observation 1: For any $x, y \in P_{\text{core}}$ with $\dist(x,y) \leq \eps$, the edge $(x', y')$ exists in $G'$. 
	\item Observation 2: For any $x, y \in P_{\text{core}}$ with $\dist(x, y) > (1 + \rho) \cdot \eps$, the edge $(x', y')$ must not exist in $G'$. 
\end{itemize}
These two observations follow directly from the fact that $|\dist(x, y) - \dist(x', y')|\leq \frac{1}{2}\rho\eps$ and the definition of $\mathcal{E}'$.

Let $T^*$ denote the EMST of $P_\text{core}$, $T$ denote an MSF of $G'$ computed by our algorithm. 
For any $u,v\in P_\text{core}$, 
the path from $u$ to $v$ on $T^*$ is denoted by $\P^*(u,v)$, and we denote the point sequence on $\P^*(u,v)$ as:
$p_0=u,p_1,p_2,\cdots,p_k=v$
.
Consider the {\em heaviest} (with the largest weight) edge $(x, y)$ on $\P^*(u,v)$. 
We prove the following two claims:
\begin{itemize}
	\item Claim 1: if $\dist(x,y) \leq \eps$, $u'$ and $v'$ must be connected in $G'$ and hence, $u$ and $v$ are in the same primitive cluster in $\mathcal{C}$.
	\item Claim 2: if $\dist(x,y) > (1 + \rho) \cdot \eps$, $u'$ and $v'$ must be {\em disconnected} in $G'$, and hence, $u$ and $v$ are in different primitive clusters in $\mathcal{C}$. 
\end{itemize}

\noindent
\underline{Proof of Claim 1.}
As $(x,y)$ is the heaviest edge on $\P^*(u,v)$, each edge $(p_i, p_{i+1})$ on $\P^*(u,v)$ has weight $\leq \eps$.
By Observation 1, the edge $(p_i', p_{i+1}')$ exists in $G'$ and therefore, $u'$ and $v'$ are connected in $G'$.

\noindent
\underline{Proof of Claim 2.}
First, observe that 
when the heaviest edge $(x,y) \in \P^*(u,v)$ in EMST $T^*$ has weight $\dist(x,y) > (1+\rho) \cdot \eps$, 
any path $\P$ from $u$ to $v$ in the {\em complete graph} $G_{\text{comp}}$ on $P_{\text{core}}$ must have their heaviest edge with weight $> (1 + \rho) \cdot \eps$.
This is because otherwise, $T^*$ will not be the EMST of $P_{\text{core}}$. 
Moreover, this case indeed implies that there does not exist an $\rho$-approximate core graph $G_\rho$ in which $u$ and $v$ are connected.
In other words, $u$ and $v$ are not allowed to be in the same $\rho$-approximate primitive cluster.
On the other hand, by Observation 2, 
for the heaviest edge $(a, b)$ on any path $\P$ from $u$ to $v$ in $G_{\text{comp}}$, 
the edge $(a', b')$ must not exist in $G'$.
And therefore, $u'$ and $v'$ must be disconnected in $G'$.

Putting these two claims together, we know that $\mathcal{C}$ is a legal set of $\rho$-approximate primitive clusters.
This completes the proof of Lemma~\ref{lmm:dbscan-correctness}.
\end{proof}

\section{Conclusion}
\label{sec:conclusion}

In this paper, we study the problems of grid graph connectivity, approximate EMST and approximate DBSCAN in the MPC model with strictly sub-linear local memory space per machine.
We show Las Vegas algorithms (succeeding with high probability), which can be derandomized, for solving these problems in $O(1)$ rounds.
Due to the importance of the problems we studied in this paper, we believe that our $O(1)$-round MPC algorithms and 
then pseudo $s$-separator technique will be of independent interests for solving other related problems in the MPC model.

\newpage
\bibliography{ref}


\end{document}